\theoremstyle{plain}
\newtheorem{theorem}{Theorem}
\newtheorem{proposition}[theorem]{Proposition}
\newtheorem{corollary}[theorem]{Corollary}
\newtheorem{lemma}[theorem]{Lemma}
\newtheorem{definition}{Definition}
\newtheorem{example}{Example}
\newcommand{\R}{\mathbb{R}}
\newcommand{\Z}{\mathbb{Z}}
\newcommand{\F}{\mathbb{F}}
\def\poly{\operatorname{poly}}
\def\polylog{\operatorname{polylog}}
\def\polyloglog{\operatorname{polyloglog}}
\newcommand{\andd}{\textsc{And}}
\newcommand{\orr}{\textsc{OR}}
\newcommand{\nott}{\textsc{Not}}
\newcommand{\nor}{\textsc{Nor}}
\newcommand{\modu}{\textsc{Mod}}
\newcommand{\gip}{\textsc{Gip}}
\newcommand{\maj}{\textsc{Maj}}
\newcommand{\disj}{\textsc{Disj}}
\newcommand{\thr}{\textsc{Thr}}
\newcommand{\equ}{\textsc{Equality}}
\newcommand{\any}{\mathsf{ANY}}
\newcommand{\sym}{\mathsf{SYM}}
\newcommand{\comp}{\mathsf{COMP}}
\newcommand{\dcc}[1]{\mathsf{D}_{#1}}
\newcommand{\acc}{\mathsf{ACC}}
\newcommand{\nexp}{\mathsf{NEXP}}
\newcommand{\bo}[1]{\mathcal{O}\left(#1\right)}
\newcommand{\bomega}[1]{\Omega\left(#1\right)}
\newcommand{\rn}{\{0,1\}}
\newcommand{\ra}{\rightarrow}
\newcommand{\abs}[1]{\left|{#1}\right|}
\newcommand{\vect}[1]{\overrightarrow{#1}}
\newcommand{\sor}[1]{\mathscr{S}{#1}}
\newcommand{\kp}{k^+}
\title{\bf Simultaneous Multiparty Communication Complexity of Composed Functions}
\author{Yassine Hamoudi \\ IRIF, Université Paris Diderot, France. \\ \texttt{hamoudi@irif.fr}}
\date{}
\begin{document}

\maketitle

\begin{abstract}

  The Number On the Forehead (NOF) model is a multiparty communication game between $k$ players that collaboratively want to evaluate a given function $F : \mathcal{X}_1 \times \cdots \times \mathcal{X}_k \ra \mathcal{Y}$ on some input $(x_1,\dots,x_k)$ by broadcasting bits according to a predetermined protocol. The input is distributed in such a way that each player $i$ sees all of it except $x_i$ (as if $x_i$ is written on the forehead of player $i$). In the Simultaneous Message Passing (SMP) model, the players have the extra condition that they cannot speak to each other, but instead send information to a referee. The referee does not know the players' inputs, and cannot give any information back. At the end, the referee must be able to recover $F(x_1,\dots,x_k)$ from what she obtained from the players.

  A central open question in the simultaneous NOF model, called the \emph{$\log n$ barrier}, is to find a function which is hard to compute when the number of players is $\polylog(n)$ or more (where the $x_i$'s have size $\poly(n)$). This has an important application in circuit complexity, as it could help to separate $\acc^0$ from other complexity classes \cite{HG91,BGKL04}. One of the candidates for breaking the $\log n$ barrier belongs to the family of \emph{composed functions}. The input to these functions in the $k$-party NOF model is represented by a $k \times (t \cdot n)$ boolean matrix $M$, whose row $i$ is the number $x_i$ on the forehead of player $i$ and $t$ is a block-width parameter. A symmetric composed function acting on $M$ is specified by two symmetric $n$- and $kt$-variate functions $f$ and $g$ (respectively), that output $f \circ g(M) = f(g(B_1),\dots,g(B_n))$ where $B_j$ is the $j$-th block of width $t$ of $M$. As the majority function $\maj$ is conjectured to be outside of $\acc^0$, Babai \emph{et. al.} \cite{BKL95,BGKL04} suggested to study the composed function $\maj \circ \maj_t$, with $t$ large enough, for breaking the $\log n$ barrier (where $\maj_t$ outputs 1 if at least $kt/2$ bits of the input block are set to 1).

  So far, it was only known that block-width $t = 1$ is not enough for $\maj \circ \maj_t$ to break the $\log n$ barrier in the simultaneous NOF model \cite{BGKL04} (Chattopadhyay and Saks \cite{CS14} found an efficient protocol for $t \leq \polyloglog(n)$, but it requires randomness to be simultaneous). In this paper, we extend this result to any constant block-width $t > 1$ by giving a \emph{deterministic simultaneous} protocol of cost $2^{\bo{2^t}} \log^{2^{t+1}}(n)$ for \textit{any} symmetric composed function $f \circ g$ (which includes $\maj \circ \maj_t$) when there are more than $2^{\Omega(2^t)} \log n$ players.

  \bigbreak\noindent{\bf Keywords:} Communication complexity, Number On the Forehead model, Simultaneous Message Passing, Log n barrier, Symmetric Composed functions.
\end{abstract}


\section{Introduction}
\label{Sec:intro}


  \subsection{Number On the Forehead and Simultaneous models}
  \label{Sec:nof}

The \textit{Number On the Forehead} (NOF) model is a multiparty communication model introduced by Chandra, Furst and Lipton \cite{CFL83} that generalizes the two player communication game of Yao \cite{Yao79}. In this model, $k$ players are given $k$ inputs $x_1 \in \mathcal{X}_1, \dots, x_k \in \mathcal{X}_k$ on which they want to compute some function $F : \mathcal{X}_1 \times \cdots \times \mathcal{X}_k \ra \mathcal{Y}$. Each player $i$ sees all of the input $(x_1,\dots,x_k)$, except $x_i$. The situation is as if input $x_i$ is written on the forehead of player $i$.

In order to collaboratively evaluate $F(x_1,\dots,x_k)$, the players communicate by \textit{broadcasting} bits according to a predetermined \textit{protocol}. This protocol specifies whose turn it is to speak, and which bit is to be sent given the information exchanged so far and the input seen by the speaking player. It also determines when communication stops. At the end, all the players must be able to recover $F(x_1,\dots,x_k)$ from the input they see and the transcript of the exchange. The cost of the protocol on input $(x_1,\dots,x_k)$ is the number of exchanged bits, and the total cost is the worst case cost on all inputs. The $k$-party deterministic communication complexity of $F$, denoted $\dcc{k}(F)$, is the cost of the most efficient protocol computing $F$.

In most of the settings, the $x_i$'s are $\poly n$-bits long (for some parameter $n$) and $\mathcal{Y} = \rn$. In this case, the naive protocol is to broadcast first the entire input $x_1$ (this can be done by player 2), and then player 1 computes $F(x_1,\dots,x_k)$ and sends the result to the other players. This protocol has cost $m+1$ (where $m = \poly(n)$ is the number of bits required for sending $x_1$), which proves $\dcc{k}(F) = \bo{\poly n}$. Consequently, a protocol will be said to be \textit{efficient} if it has cost $\bo{\polylog n}$ (i.e. we seek for exponential speed-up over the naive protocol).

Among the many variants of the previous framework (randomized, quantum, etc.), we will be interested in the \emph{simultaneous} (or \emph{Simultaneous Message Passing - SMP}) model \cite{Yao79,Nis93,BKL95,PRS97} in which the players cannot speak to each other but instead send information to a referee. The referee does not know the players' inputs, and cannot give any information back. At the end, the referee must be able to recover $F(x_1,\dots,x_k)$ from what she obtained from the players. The simultaneous deterministic communication complexity is denoted $\dcc{k}^{||}(F)$, and it always satisfies $\dcc{k}(F) \leq \dcc{k}^{||}(F)$. It has often been easier to reason first in this weaker model for proving lower bounds \cite{BGKL04,PRS97,BPSW05,BJKS02}. It is also more suitable and fruitful for studying certain functions, such as $\equ$ in the two party setting \cite{Yao79,Amb96,NS96,BK97,BCWW01,GRW08,BGK15}. We will show in the next section that the simultaneous deterministic communication model is also closely connected to lower bound results in the complexity class $\acc^0$.



  \subsection[The log n barrier problem and ACC0 lower bounds]{The log n barrier problem and $\acc^0$ lower bounds}
  \label{Sec:lognBarrier}

The NOF model has proved to be of value in the study of many areas of computer science, such as branching programs \cite{CFL83}, Ramsey theory \cite{CFL83}, circuit complexity \cite{HG91, BT94}, quasirandom graphs \cite{CT93}, proof complexity \cite{BPS07}, etc. One of the most interesting connections, pointed out by H{\aa}stad and Goldmann \cite{HG91} and refined in \cite{BGKL04}, is a way to derive lower bounds for the complexity class\footnote{$\acc_0$ refers to the functions computable by constant-depth poly-size circuits with unbounded fan-in $\andd$, $\orr$, $\nott$ and $\modu_m$ gates (where $\modu_m$ outputs $0$ iff the sum of its inputs is divisible by $m$).} $\acc^0$ from lower bounds in the \emph{simultaneous} NOF model. More precisely, according to a result from Yao, Beigel and Tarui \cite{Yao90,BT94}, any function $f \in \acc^0$ can be expressed as a depth-2 circuit whose top gate is a symmetric gate of fan-in $2^{\log^c n}$, and each bottom gate is an $\andd$ gate of fan-in $\log^d n$ (for some constants $c,d$). Consequently, for any partition of the input of $f$ between $k = \log^d n - 1$ players in the \emph{simultaneous} NOF model, there exists a partition of the $\andd$ gates between the players such that each of them sees all the input bits she needs to evaluate the gates she received. The players can then send to the referee the number of gates that evaluate to 1, which enables the referee to compute $f$. The total cost of this protocol is $\bo{k \log\left(2^{\log^c n}\right)} = \bo{\log^{c+d} n}$. Conversely, any super-polylogarithmic lower bound in the \emph{simultaneous} NOF model for a function $f$ and a partition of its input between $\polylog(n)$ players would imply $f \notin \acc^0$.

Separating $\acc^0$ from other complexity classes is a central question in complexity theory. It is conjectured that $\acc^0$ does not contain the majority function $\maj$, but the only result known so far is $\nexp \not\subset \acc^0$ \cite{Wil14}. The aforementioned connection with communication complexity has motivated the search for a function which is hard to compute for $k \geq \log n$ players in the simultaneous NOF model. This problem is called the \textit{$\log n$ barrier}.

Obtaining lower bounds in the NOF model is a challenging task, as the current methods become very weak when $k \geq \log n$. The only general lower bound technique known so far is the discrepancy method and its variants \cite{BNS92,CT93,Raz00,She11}. One of the early application of it was an $\Omega(n/4^k)$ lower bound on the randomized complexity of the \textit{Generalized Inner Product} ($\gip$) function \cite{BNS92}. A long series of generalizations and improvements of the discrepancy method subsequently led to an $\bomega{\frac{\sqrt{n}}{k2^k}}$ (resp. $\Omega(n/4^k)$) lower bound on the randomized (resp. deterministic) complexity of the \textit{Disjointness} ($\disj$) function \cite{Tes03,BPS06,CA08,LS09,BH12,She16,She14,RY15}. It might seem like other lower bound arguments could prove that $\gip$ and $\disj$ remain hard for $k \geq \log n$ players. However, surprising non-simultaneous \cite{Gro94,ACFN15} and simultaneous \cite{BGKL04,ACFN15} protocols proved that the aforementioned lower bounds are nearly optimal, and that these two functions cannot break the $\log n$ barrier. Very recently, Podolskii and Sherstov \cite{PS17} showed that the randomized complexity of $\gip$ and $\disj$ is exactly $\Theta\left(\frac{\log n}{\left\lceil 1 + k/\log n \right\rceil}+1\right)$ when $k \geq \log n$, and built a function having complexity $\Omega(\log n)$ independently of $k$. Although these last results do not break the $\log n$ barrier, they are the first superconstant lower bounds proved for explicit functions when $k \geq \log n$.


  \subsection{Composed Functions}
  \label{Sec:composedFun}

An input $x_1, \dots, x_k \in \rn^n$ to $k$ players in the NOF model can be visualized as a $k \times n$ boolean matrix $M$ where row $i$ is the number $x_i$ on the forehead of player $i$. The protocols known so far for $\gip$ and $\disj$ strongly rely on the particular way these functions act on matrix $M$. They both consist in applying the $g = \andd$ function on each of the $n$ columns of $M$, followed by the $f = \modu_2$ (for $\gip$) or $f = \nor$ (for $\disj$) function on the $n$ resulting bits. Since $\gip$ and $\disj$ do not break the $\log n$ barrier, a natural move has been to try other $f$ and $g$ functions, and to increase the number $t$ of columns on which each $g$ function applies. These are called the \textit{composed functions}, formally defined below and depicted in Figure \ref{Fig:compFun}.

\begin{definition}[Boolean input version]
  \label{Def:compFunBool}
  Fix a block-width parameter $t \geq 1$, and consider functions $f : \rn^n \ra \rn$ and $\vec{g}=(g_1,\dots,g_n)$ where $g_j : (\rn^t)^k \ra \rn$. Given $x_1, \dots, x_k \in \rn^{t \cdot n}$, the \textit{composed function} $f \circ \vec{g}$ for $k$ players outputs $f \circ \vec{g} (x_1, \dots, x_k) = f(g_1(B_1), \dots, g_n(B_n))$ where $B_j \in (\rn^t)^k$ is the $j^{th}$ block of width $t$ in the matrix representation $M$ of the input. When $g = g_1 = \dots = g_n$, we denote $f \circ \vec{g}$ by $f \circ g$.
\end{definition}

\begin{figure}
  \centering
    \begin{tikzpicture}
  \draw (10,10) -- (17,10) ;
  \draw (10,7) -- (17,7) ;

  \draw (11.6,9.5) -- (12.3,9.5) ;
  \draw (10,9.5) -- (10.7,9.5) ;
  \draw (10,9) -- (10.7,9) ;
  \draw (11.6,9) -- (12.3,9) ;
  \draw (10,7.5) -- (10.7,7.5) ;
  \draw (11.6,7.5) -- (12.3,7.5) ;

  \draw (16.2,9.5) -- (17,9.5) ;
  \draw (16.2,9) -- (17,9) ;
  \draw (16.2,7.5) -- (17,7.5) ;
  \draw (16.2,9.5) -- (17,9.5) ; 

  \draw (10,10) -- (10,7) ;
  \draw (17,10) -- (17,7) ;

  \draw (10.7,10) -- (10.7,9) ;
  \draw (10.7,7.5) -- (10.7,7) ;
  \draw (11.6,10) -- (11.6,9) ;
  \draw (11.6,7.5) -- (11.6,7) ;
  \draw (12.3,10) -- (12.3,9) ;
  \draw (12.3,7.5) -- (12.3,7) ;
  \draw (16.2,10) -- (16.2,9) ;
  \draw (16.2,7.5) -- (16.2,7) ;        

  
  \draw[dashed] (10.7,9) -- (10.7,7.5) ;
  \draw[dashed] (11.6,9) -- (11.6,7.5) ;
  \draw[dashed] (12.3,9) -- (12.3,7.5) ;
  \draw[dashed] (16.2,9) -- (16.2,7.5) ;
                
  \draw (11.2,8.3) node{$\cdots$} ;                
  \draw (14.3,8.3) node{$\cdots$} ;                
  \draw (8.4,8.4) node{$\vdots$} ;                

  \draw (10.35,9.7) node{\small $x_{1,1}$} ;                
  \draw (11.95,9.7) node{\small $x_{1,t}$} ;                
  \draw (10.35,9.2) node{\small $x_{2,1}$} ;                
  \draw (11.95,9.2) node{\small $x_{2,t}$} ;  
  \draw (10.35,7.2) node{\small $x_{k,1}$} ;                
  \draw (11.95,7.2) node{\small $x_{k,t}$} ; 
  \draw (16.6,9.7) node{\small $x_{1,tn}$} ;                
  \draw (16.6,9.2) node{\small $x_{2,tn}$} ;                
  \draw (16.6,7.2) node{\small $x_{k,tn}$} ;             

  \draw (8.7,9.75) node{\small Player 1 ($x_1$)} ;  
  \draw (8.7,9.25) node{\small Player 2 ($x_2$)} ;  
  \draw (8.7,7.25) node{\small Player $k$ ($x_k$)} ;  

  \draw[<->] (10,10.2) -- (17,10.2) ;
  \draw (13.5,10.4) node{\small $t \cdot n$} ;  
  \draw[<->] (17.2,10) -- (17.2,7) ;
  \draw (17.4,8.5) node{\small $k$} ; 
    
  \draw (10,6.8) -- (12.3,6.8) ;
  \draw (10,6.8) -- (10,6.9) ;
  \draw (12.3,6.8) -- (12.3,6.9) ;
  \draw[->] (11.05,6.8) -- (11.05,6.5) ;
  \draw (11.05,6.3) node{\small $g_1$} ;

  \draw (14.7,6.8) -- (17,6.8) ;
  \draw (14.7,6.8) -- (14.7,6.9) ;
  \draw (17,6.8) -- (17,6.9) ;
  \draw[->] (15.85,6.8) -- (15.85,6.5) ;
  \draw (15.85,6.3) node{\small $g_n$} ;

  \draw[->] (17.1,6.4) -- (17.6,6.4) ;
  \draw (17.8,6.4) node{\small $f$} ;
\end{tikzpicture}
  \caption{Matrix structure of a composed function $f \circ \protect\vec{g}$ of block-width $t$.}
  \label{Fig:compFun}
\end{figure}

Both $\gip = \modu_2 \circ \andd$ and $\disj = \nor \circ \andd$ are composed functions for $t=1$, with the additional property that $\modu_2$, $\nor$ and $\andd$ are \emph{symmetric} functions (i.e. invariant under any permutation of their input). Since the majority function $\maj$ is conjectured to be outside of $\acc^0$, Babai \emph{et. al.} \cite{BKL95,BGKL04} suggested to look at $\maj \circ \maj_t$ and $\maj \circ \thr^s_t$ for breaking the $\log n$ barrier (where $\maj_t$ outputs 1 if at least $kt/2$ bits of the input block are set to 1, and $\thr^s_t(r_1,\dots,r_k) = 1$ if $r_1 + \dots + r_k \geq s$ for $r_1, \dots, r_k$ seen as $t$-bits numbers).


Another way to look at composed functions of block-width $t$ is to interpret each sub-row $r \in \rn^t$ of each block as a number in $\Z_d$, where $d=2^t$. This representation of the input as a $k \times n$ matrix $M$ over some set $\Z_d$ is sometimes more convenient to use. Below, we reformulate Definition \ref{Def:compFunBool} using this point of view.

\begin{definition}[Integer input version]
  \label{Def:compFun}
  Fix an integer $d \geq 2$ and consider functions $f : \rn^n \ra \rn$ and $\vec{g}=(g_1,\dots,g_n)$ where $g_j : \Z_d^k \ra \rn$. Given $x_1, \dots, x_k \in \Z_d^n$, the \textit{composed function} $f \circ \vec{g}$ for $k$ players outputs $f \circ \vec{g} (x_1, \dots, x_k) = f(g_1(C_1), \dots, g_n(C_n))$ where $C_j \in \Z_d^k$ is the $j^{th}$ column in the matrix representation $M$ of the input. When $g = g_1 = \dots = g_n$, we denote $f \circ \vec{g}$ by $f \circ g$.
\end{definition}

The set of all composed functions $f \circ \vec{g}$ (resp. $f \circ g$) over $\Z_d$ is denoted $\any \circ \vect{\any}_{\Z_d}$ (resp. $\any \circ \any_{\Z_d}$). Similarly, $\sym \circ \sym_{\Z_d}$ is the set of $f \circ g$ for symmetric $f$ and symmetric $g$ functions, $\sym \circ \vect{\any}_{\Z_d}$ is the set of $f \circ \vec{g}$ for symmetric $f$ and any $\vec{g}$, etc. If $d = 2$ (which corresponds to block-width $t = 1$), we will drop the subscript and write $\any \circ \vect{\any}$, $\sym \circ \sym$, etc. We have for instance $\gip, \disj \in \sym \circ \sym$ and $\maj \circ \maj_t, \maj \circ \thr^s_t \in \sym \circ \sym_{\Z_{2^t}}$.

The first efficient protocol for composed functions with $\polylog(n)$ or more players was given by Grolmusz \cite{Gro94}. It is a \emph{non-simultaneous} protocol of cost $\bo{\log^2 n}$ for any composed function in $\sym \circ \andd$ (the inner function is fixed to be $\andd$) when $k \geq \log n$. The study of composed functions with symmetric outer function $f$ was subsequently continued, as it captures many other interesting cases in communication complexity. Babai \emph{et. al.} \cite{BKL95} proposed first $\maj \circ \maj_1$ as a candidate to break the $\log n$ barrier. However, in a subsequent work \cite{BGKL04}, they found a \emph{simultaneous} protocol that applies to $\sym \circ \comp^c$ (where $\comp^c$ holds for \textit{$c$-compressible symmetric} functions\footnote{A class $\mathscr{G}$ (parameterized by $k$) of symmetric functions $g : \rn^k \ra \rn$ is $c$-compressible if for any function $g \in \mathscr{G}$, set $S \subsetneq \{1,\dots,k\}$ and input $(x_i)_{i \in S} \in \rn^{|S|}$ there is a message $m_S$ of size $\bo{1} + c \log(k-|S|)$ such that $g(x_1,\dots,x_k)$ can be computed for any $(x_i)_{i \in \{1,\dots,k\} \backslash S} \in \rn^{k-|S|}$ from knowledge of $m_S$ and $(x_i)_{i \in \{1,\dots,k\} \backslash S}$. The $\maj_1$ and $\thr^s_1$ functions are $1$-compressible \cite{BGKL04}.}, a subset of $\sym$ that contains $\maj$ and $\andd$). It has cost $\bo{\log^{2+c} n}$ when $k > 1 + \log n$. Later, Ada \emph{et. al.} \cite{ACFN15} generalized this result to $\sym \circ \vect{\any}$, with a \emph{simultaneous} protocol of cost $\bo{\log^3 n}$ for $k > 1 + 2 \log n$ players. The only protocol known so far for block-width $t > 1$ has been discovered by Chattopadhyay and Saks \cite{CS14}. It has cost $\bo{d \log n \log(dn)}$ for $\sym \circ \vect{\any}_{\Z_d}$ when $k > 1 + d \log(3n)$ (which is efficient for $d \leq \polylog n$). However, it is \emph{not simultaneous} in the deterministic setting (the authors showed how to make it simultaneous using shared randomness between the players). Thus, none of these previous results prevents from breaking the $\log n$ barrier in the SMP model with composed functions of block-width as small as $t = 2$. The goal of this paper is to rule out this possibility for all symmetric composed functions of constant block-width $t > 1$.

  \subsection{Summary of Results and Comparison to Previous Protocols}
  \label{Sec:summResults}

Below, we describe our main results, and summarize in Table \ref{fig:summary} the complexity of all the known protocols for composed functions. Then, we review the main ideas used in the previous literature, and we explain how we differ from them.

\subparagraph*{Our results} In this paper, we describe the first \textit{deterministic simultaneous} protocol for symmetric composed functions of block-width $t > 1$. Our result is divided into two parts. We first give (Section \ref{Sec:equPart}) a protocol of cost $\bo{k (k+d)^{d-1} \log n}$ for $\sym \circ \sym_{\Z_d}$ when the number of players is $k \geq 4^{d-1} \log n$. In a second time (Section \ref{Sec:polyPart}), we build upon this result to give a simultaneous protocol of cost $2^{\bo{d}} \log^{2 \cdot 2^{\lceil \log d \rceil}}(n)$ for $\sym \circ \vect{\sym}_{\Z_d}$ when $k \geq 4^{2d} \log n$. Unlike the first protocol, this last result also works with different inner functions $g_1, \dots, g_n$ and it is efficient even if $k$ is super-polylogarithmic.

\begin{figure}[H]
  \begin{center}
    {\renewcommand{\arraystretch}{1.2}
    \renewcommand{\tabcolsep}{1.5mm}
    \begin{tabular}{|l|l|l|c|l|}
      \hline
                           & \begin{tabular}{c} Supported \\ functions\end{tabular} & \begin{tabular}{c} Complexity of \\ the protocol\end{tabular} & \multicolumn{1}{c|}{Simultaneous} & \begin{tabular}{c} Number of players \\ required\end{tabular} \\\hline
      Grolmusz \cite{Gro94}              & $\sym \circ \andd$              & $\bo{\log^2 n}$           & No    & $k \geq \log n$ \\ \hline
      Babai \emph{et. al.} \cite{BGKL04} & $\sym \circ \comp^c$            & $\bo{\log^{2+c} n}$       & Yes   & $k > 1 + \log n$  \\ \hline
      Ada \emph{et. al.} \cite{ACFN15}   & $\sym \circ \vect{\any}$        & $\bo{\log^3 n}$           & Yes   & $k > 1 + 2 \log n$  \\ \hline
      C. and Saks \cite{CS14}            & $\sym \circ \vect{\any}_{\Z_d}$ & $\bo{d \log n \log(dn)}$  & No    & $k > 1 + d \log(3n)$   \\ \hline
      \textbf{This work}                 & $\sym \circ \vect{\sym}_{\Z_d}$ & $2^{\bo{d}} \log^{4d}(n)$ & Yes   & $k \geq 4^{2d} \log n$  \\ \hline
    \end{tabular}}
  \end{center}
  \caption{Deterministic protocols for different families of composed functions. The top three results apply only to block-width 1 (i.e. $d=2$), whereas the last two results work for any $d$. Note that the protocol of \cite{CS14} can be made simultaneous using shared randomness between the players.}
  \label{fig:summary}
\end{figure}

{\noindent
\underline{Adjacent vertices of the $\{0,1\}^n$ hypercube.} For block-width $t=1$ and an input matrix $M \in \rn^{k \times n}$, denote $n_c$ the number of times column $c \in \rn^k$ occurs in $M$. Grolsmusz \cite{Gro94} noticed that if $c_1, \dots, c_m$ is a sequence of adjacent vertices of the $\{0,1\}^k$ hypercube (i.e. $c_{l+1}$ differs from $c_l$ by exactly one coordinate) then $n_{c_1} = \left(\sum_{l=1}^{m-1} (-1)^{l+1}(n_{c_l}+n_{c_{l+1}}) \right) + (-1)^{m+1} n_{c_m}$. Moreover, if position $i$ is the coordinate at which $c_l$ and $c_{l+1}$ differ, then the quantity $n_{c_l}+n_{c_{l+1}}$ is known by player $i$. This leads to a straightforward simultaneous protocol of cost $\bo{k \log n}$ for computing $n_{c_1}$, provided that $n_{c_m}$ is known by the referee. In his initial work, Grolsmusz \cite{Gro94} gave a non-simultaneous way to find some initial $n_{c_m}$. Ada \emph{et. al.} \cite{ACFN15} noticed later that this step can be made simultaneous using the protocol of Babai \emph{et. al.} \cite{BGKL04}, and that the idea of Grolsmusz (initially designed for $\sym \circ \andd$) easily adapts to $\sym \circ \vect{\any}$. Unfortunately, this "hypercube view" does not generalize to block-width $t > 1$: for each $i$ and $c \in (\rn^t)^k$, the number of vertices that differ from $c$ only at position $i$ is now $2^t - 1 > 1$. It is easy to see that writing a similar telescoping sum as above, in which each term would be known by a player, is no longer possible.}

~\\
{\noindent
\underline{Counting up to symmetry.} Given a $k \times n$ matrix $M$ over $\Z_d$, for all $0 \leq e_1 + \dots + e_{d-1} \leq k$ denote $y_{e_1,\dots,e_{d-1}}$ the number of columns of $M$ with exactly $e_s$ occurrences of each $s \in \Z_d \backslash \{0\}$ (we do not put $e_0$ since it is always equal to $k - (e_1 + \dots + e_{d-1})$). These numbers provide less information than the $n_c$'s defined above, but they still unable us to compute $f \circ g(M)$ for all $f \circ g \in \sym \circ \sym_{\Z_d}$. If $M$ is distributed between $k$ players in the NOF model (player $i$ does not see row $i$), a naive \textit{simultaneous} protocol is to have each player i send the number of columns $a^i_{e_1,\dots,e_{d-1}}$ which contain, \textit{from her point of view}, exactly $e_s$ occurrences of each element $s \in \{1, \dots, d-1\}$ (for all $e_1 + \dots + e_{d-1} \leq k-1$). Babai \emph{et. al.} \cite{BGKL04} analyzed this protocol in the case $d = 2$, and showed that it gives the referee enough information to recover the $y_{e_1,\dots,e_{d-1}}$'s, provided that $k > 1 + \log n$. In Section \ref{Sec:equPart}, we extend this analysis to any $d > 2$. The core of the proof, as in \cite{BGKL04}, is to define a specific equation (using the $a^i_{e_1,\dots,e_{d-1}}$'s) whose only integral solution is the $y_{e_1,\dots,e_{d-1}}$'s.}

~\\
{\noindent
\underline{The shifted basis technique.} The only protocol \cite{CS14} known prior to this work for block-width $t > 1$ is based on the following observation: given polynomial representations of the inner functions $g_j$ (over variables $x_{1,j},\dots,x_{k,j}$), each term involving strictly less than $k$ variables can be evaluated on input matrix $M$ by at least one player (in fact, by all the players that have one of the missing variables on their foreheads). The key idea of \cite{CS14} is to get rid of the remaining terms by expressing the $g_j$ in a $c$-shifted basis where all terms of degree $k$ will evaluate to $0$ on $M$ (shifting for instance  monomial $x_{1,j} \cdots x_{k,j}$ by $c = (s_1, \dots, s_k)$ means to replace it with $(x_{1,j}-s_1) \cdots (x_{k,j}-s_k)$). To this end, it would suffice to find some $c$ that shares at least one coordinate in common with each column of $M$. Provided that $k$ is large enough, \cite{CS14} showed that a randomly picked $c$ has this property with high probability. This gives rise to a simultaneous protocol for $\sym \circ \vect{\any}_{\Z_d}$ \textit{if} the players have access to a shared random string. In the deterministic setting (no shared randomness), it is not known how to make this protocol simultaneous.}

~\\
{\noindent
\underline{Different inner functions, and reducing the number of players.} The communication complexity is expected to decrease as $k$ grows up (since the overlap of information among the players increases). However, this fact is not reflected in the cost of our first protocol (Section \ref{Sec:equPart}). This issue is closely related to that of having different inner functions $g_1, \dots, g_n$. Indeed, the problem of computing $f \circ g \in \sym \circ \sym_{\Z_d}$ with $k$ players on a matrix $M \in \Z_d^{k \times n}$ can be changed into computing $f \circ (\widetilde{g_1}, \dots, \widetilde{g_n}) \in \sym \circ \vect{\sym}_{\Z_d}$ with the first $\ell < k$ players on the submatrix $\widetilde{M} \in \Z_d^{\ell \times n}$ (first $\ell$ rows of $M$), where $\widetilde{g_j} : \Z_d^{\ell} \rightarrow \{0,1\}$ is defined as $\widetilde{g_j}(u) = g(u \cdot v_j)$ and $v_j$ is the values occurring from row $\ell+1$ to $k$ in the $j$-th column of $M$ (note that the new $\widetilde{g_j}$ functions are still symmetric, but unknown to the referee). Our first protocol cannot handle different inner functions, but this issue will be solved in Section \ref{Sec:polyPart} where we describe a protocol for $\sym \circ \vect{\sym}_{\Z_d}$ based on a new use of the \textit{polynomial representations} (different than \cite{CS14}). We will show that each inner function $\widetilde{g_j}$ can be represented into a (small) basis of symmetric functions $\{m_a\}_a$ (Section \ref{Sec:polyRep}), which will allow us to split the problem of computing $f \circ (\widetilde{g_1}, \dots, \widetilde{g_n})$ on $\widetilde{M}$ into computing each $f \circ m_a \in \sym \circ \sym_{\Z_d}$ on a well-chosen matrix $\widetilde{M_a}$. This last step can be done with the initial protocol of Section \ref{Sec:equPart}.}




\section{Polynomial Representations for Symmetric Functions}
\label{Sec:polyRep}

Throughout this paper, $\Z_d$ will denote the set of integers $\{0,\dots,d-1\}$ and $\F_p$ is the finite field with $p$ elements. Furthermore, a function $f : \mathcal{X}^m \ra \mathcal{Y}$ is said to be \textit{$m$-symmetric} (or \textit{symmetric}) if it is invariant under any permutation of the input variables (i.e. for any input $(x_1,\dots,x_m)$ and permutation $\sigma \in S_m$, we have $f(x_1,\dots,x_m) = f(x_{\sigma(1)}, \dots, x_{\sigma(m)})$).

The protocol designed in Section \ref{Sec:polyPart} for composed functions $f \circ \vec{g}$ requires a concise polynomial representation of the inner functions $g_1, \dots, g_n : (\rn^t)^k \ra \rn$. Informally, we look for a field $K$ and polynomials $G_j \in K[X]$ with variables $X = (x_{u,v})_{1 \leq u \leq k, 1 \leq v \leq t}$, such that:
\begin{enumerate}
  \item[(a)] for all $x \in (\rn^t)^k$, $g_j(x) = G_j(x)$
  \item[(b)] the order of $K$ is at least $n+1$ (so that the set $\{0,\dots,n\}$ of values taken by $\sum_j g_j(x^{(j)})$ for $x^{(1)},\dots,x^{(n)} \in (\rn^t)^k$ can be embedded into $K$)
  \item[(c)] the $G_j$ polynomials can be represented in a basis of size $\bo{\poly k}$ when $t$ is constant
  \item[(d)] the values of the coefficients of the $G_j$ polynomials in this basis are less than $n^c$, for some absolute constant $c$ independent of $k$ and $t$.
\end{enumerate}

The first step towards this end is to look at the usual \textit{$\R$-multilinear representation} (also called \textit{Fourier expansion} \cite{Odo14}) of a function $g : (\rn^t)^k \ra \rn$. For each $a = (a_{u,v})_{1 \leq u \leq k, 1 \leq v \leq t}\in (\rn^t)^k$ we define the \textit{indicator polynomial} $1_{\{a\}}(x)$ to be $1_{\{a\}}(x) = \prod_{1 \leq u \leq k, 1 \leq v \leq t} (1 - a_{u,v} + (2a_{u,v}-1)x_{u,v})$. It is easy to see that it takes value $1$ when $x = a$ and value $0$ when $x \in (\rn^t)^k \setminus \{a\}$. Consequently, we have $g(x) = \sum_{a \in (\rn^t)^k} g(a) 1_{\{a\}}(x)$ for all $x \in (\rn^t)^k$. If we let $x^a$ be the monomial $\prod_{(u,v) : a_{u,v}=1} x_{u,v}$, then there exist real coefficients $\widehat{g}(a)$ such that it can be rewritten as the following multilinear polynomial
  \begin{equation}
    \label{Eq:RmultiRep}
    g(x) = \sum_{a \in (\rn^t)^k} \widehat{g}(a) x^a
  \end{equation}
Moreover, the $\widehat{g}(a)$ coefficients are given by the Möbius inversion formula
  \begin{equation}
    \label{Eq:interpCoeff}
    \widehat{g}(a) = \sum_{a' \subseteq a} (-1)^{\abs{a} - \abs{a'}} g(a')
  \end{equation}
where $\abs{a}$ is the number of $1$ in $a \in (\rn^t)^k$, and $a' \subseteq a$ means $a'_{u,v} = 0$ whenever $a_{u,v} = 0$.

Polynomial (\ref{Eq:RmultiRep}) is called the \textit{$\R$-multilinear representation} of function $g$. It satisfies requirements (a) and (b) above, but not requirement (c). Indeed, these polynomials are expressed in the basis of monomials $\{x^a\}_{a \in (\rn^t)^k}$ which has size $2^{t \cdot k}$.

In order to reduce the size of the basis, we restrict ourselves to the $k$-symmetric functions $g : (\rn^t)^k \ra \rn$ (as will be the case in Section \ref{Sec:polyPart}). This condition leads to the following equalities between coefficients.

\begin{lemma}
  \label{Lem:permCoeff}
  For any $a = (a_1,\dots,a_k) \in (\rn^t)^k$ and any permutation $\sigma \in S_k$, if $g : (\rn^t)^k \ra \rn$ is a $k$-symmetric function then the coefficients $\widehat{g}(a)$ and $\widehat{g}(\sigma(a))$ in the $\R$-multilinear representation of $g$ are equal (where $\sigma(a) = (a_{\sigma(1)},\dots,a_{\sigma(k)})$).
\end{lemma}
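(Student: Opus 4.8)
The plan is to show that applying a coordinate permutation $\sigma$ to the block-rows of the input induces the same permutation on the Möbius-inversion formula, and then to invoke the $k$-symmetry of $g$ to conclude that the two sums are termwise equal.

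First I would make precise what $\sigma$ does on $(\rn^t)^k$. An element $a = (a_1,\dots,a_k)$ has each $a_u \in \rn^t$, and $\sigma(a) = (a_{\sigma(1)},\dots,a_{\sigma(k)})$ permutes the $k$ block-rows while leaving each block's internal $t$ coordinates untouched. Two observations are immediate and I would record them as a preliminary: (i) $|\sigma(a)| = |a|$, since permuting the blocks does not change the total number of ones; and (ii) the map $a' \mapsto \sigma(a')$ is a bijection of $(\rn^t)^k$ that restricts to a bijection from $\{a' : a' \subseteq a\}$ onto $\{b' : b' \subseteq \sigma(a)\}$, because $a'_{u,v} = 0$ whenever $a_{u,v} = 0$ is equivalent, after relabeling $u \mapsto \sigma^{-1}(u)$, to $\sigma(a')_{u,v} = 0$ whenever $\sigma(a)_{u,v} = 0$; and this bijection preserves cardinality, so $|\sigma(a')| = |a'|$.

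Next I would write out $\widehat{g}(\sigma(a))$ using the Möbius inversion formula~(\ref{Eq:interpCoeff}):
\begin{equation*}
  \widehat{g}(\sigma(a)) = \sum_{b' \subseteq \sigma(a)} (-1)^{|\sigma(a)| - |b'|} g(b').
\end{equation*}
By observation (ii) I substitute $b' = \sigma(a')$ as $a'$ ranges over $\{a' : a' \subseteq a\}$; combined with (i) this turns the sum into
\begin{equation*}
  \widehat{g}(\sigma(a)) = \sum_{a' \subseteq a} (-1)^{|a| - |a'|} g(\sigma(a')).
\end{equation*}
Now the $k$-symmetry of $g$ gives $g(\sigma(a')) = g(a')$ for every $a'$ (applying the defining invariance of a $k$-symmetric function to the permutation $\sigma$ acting on the $k$ arguments $a'_1,\dots,a'_k \in \rn^t$). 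Hence the right-hand side equals $\sum_{a' \subseteq a} (-1)^{|a| - |a'|} g(a') = \widehat{g}(a)$, which is exactly the claim. Since the $\R$-multilinear representation~(\ref{Eq:RmultiRep}) is unique, these are indeed the coefficients in that representation, so there is nothing more to check.

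I do not expect a genuine obstacle here — the statement is essentially a bookkeeping consequence of the explicit Möbius formula plus symmetry. The only point that warrants a little care is the index-substitution in the sum: one must check that $\sigma$ really does carry the downset $\{a' \subseteq a\}$ bijectively onto $\{b' \subseteq \sigma(a)\}$ and preserves Hamming weight, rather than, say, only mapping it into that set. Spelling out $a' \subseteq a \iff \sigma(a') \subseteq \sigma(a)$ coordinatewise (using that $\sigma$ acts on the block-index $u$ and not on the within-block index $v$) settles this cleanly, and the rest is immediate.
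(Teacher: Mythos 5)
Your proof is correct and takes essentially the same route as the paper, which simply states that the claim is direct from the Möbius inversion formula~(\ref{Eq:interpCoeff}); you have just spelled out the substitution $b' = \sigma(a')$, the weight preservation, and the use of $k$-symmetry that the paper leaves implicit. Nothing further is needed.
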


\begin{proof}
  The proof is direct from Equation (\ref{Eq:interpCoeff}).
\end{proof}

This lemma motivates the definition of the following polynomials, that will be used to obtain a basis for the $k$-symmetric functions over $(\rn^t)^k$.

\begin{definition}
  Given $a \in (\rn^t)^k$, the \emph{monomial $k$-symmetric polynomial} $m_a(x)$ over variables $(x_{u,v})_{1 \leq u \leq k, 1 \leq v \leq t}$ is defined to be the sum of all the \textit{distinct} monomials $x^{\sigma(a)}$ where $\sigma \in S_k$ ranges over all the permutations.
\end{definition}

\begin{example}
  If $(t,k)=(2,3)$ and $a=((1,1),(0,1),(0,1))$ then $m_a(x) = x_{1,1}x_{1,2}x_{2,2}x_{3,2} + x_{1,2}x_{2,1}x_{2,2}x_{3,2} + x_{1,2}x_{2,2}x_{3,1}x_{3,2}$.
\end{example}

According to Lemma \ref{Lem:permCoeff}, any $k$-symmetric function $g : (\rn^t)^k \ra \rn$ can be expressed as a linear combination of monomial $k$-symmetric polynomials. From this observation, we can derive a basis for the $k$-symmetric functions by taking all the \emph{distinct} monomial $k$-symmetric polynomials. We specify a subset of elements $a \in (\rn^t)^k$ that corresponds to this basis.

\begin{definition}
  \label{Def:ordered}
  We define a tuple $a = (a_1, \dots, a_k) \in (\rn^t)^k$ to be \textit{sorted}, if $\abs{a_u} \leq \abs{a_{u'}}$ for all $1 \leq u \leq u' \leq k$, and $a_u \leq_{lex} a_{u'}$ whenever $\abs{a_u} = \abs{a_{u'}}$ (where $\abs{a_u}$ is the Hamming weight of $a_u$, and $\leq_{lex}$ is the lexicographic order over $\rn^t$). The set of all the sorted tuples over $(\rn^t)^k$ is denoted $\sor(t,k)$.
\end{definition}

\begin{lemma}
  \label{Lem:symBasis}
  The set $\left\{m_a(x) : a \in \sor(t,k)\right\}$ is a basis for the $k$-symmetric functions $g : (\rn^t)^k \ra \rn$. Moreover, it has size $\binom{k + 2^t - 1}{2^t-1}$.
\end{lemma}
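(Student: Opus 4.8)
The plan is to work inside the $2^{tk}$-dimensional real vector space $V$ of all functions $(\rn^t)^k \to \R$, equipped with its monomial basis $\{x^b\}_{b \in (\rn^t)^k}$, and to identify the $k$-symmetric functions with a subspace $V^{\mathrm{sym}}$ of $V$ (this subspace contains every $k$-symmetric Boolean function $(\rn^t)^k \to \rn$, so producing a basis of it is exactly what the statement asks for). The first thing I would nail down is the orbit structure of the action of $S_k$ on $(\rn^t)^k$ that permutes the $k$ coordinates: two tuples lie in the same orbit iff for each $s \in \rn^t$ they have the same number of coordinates equal to $s$, so orbits correspond bijectively to multisets of size $k$ drawn from the $2^t$-element set $\rn^t$. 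I would then observe that the sorted tuples of Definition \ref{Def:ordered} are precisely a system of orbit representatives: within an orbit, sorting the coordinates first by Hamming weight and then lexicographically among equal weights yields one and only one tuple, so $|\sor(t,k)|$ equals the number of orbits. The second preliminary observation is that $m_a(x)$, defined as the sum of the \emph{distinct} monomials $x^{\sigma(a)}$, equals the orbit sum $\sum_{b \in \mathcal{O}(a)} x^b$: the map $b \mapsto x^b$ is injective, so distinct permuted tuples give distinct monomials, the monomials occurring in $m_a$ are exactly those indexed by the orbit $\mathcal{O}(a)$ of $a$, each with coefficient $1$. Hence, for distinct $a,a' \in \sor(t,k)$ the polynomials $m_a$ and $m_{a'}$ involve disjoint sets of monomials, and as $a$ ranges over $\sor(t,k)$ these sets partition $\{x^b\}_{b}$.

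With these in hand, both halves of ``basis'' are short. For spanning: any $k$-symmetric $g$ has an $\R$-multilinear representation $g(x) = \sum_b \widehat g(b) x^b$ (Equation (\ref{Eq:RmultiRep}), which holds verbatim for real-valued $g$ by the same indicator-polynomial argument), and by Lemma \ref{Lem:permCoeff} the coefficient function $\widehat g$ is constant on each orbit; grouping the sum over orbits and using that $m_a$ is the orbit sum gives $g = \sum_{a \in \sor(t,k)} \widehat g(a)\, m_a$. For linear independence: if $\sum_{a \in \sor(t,k)} c_a m_a = 0$, expanding via the orbit-sum description rewrites the left-hand side as $\sum_b c_{r(b)}\, x^b$, where $r(b) \in \sor(t,k)$ denotes the representative of $b$'s orbit; since the $x^b$ are linearly independent in $V$, each $c_{r(b)}$, hence each $c_a$, must be $0$. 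Thus $\{m_a : a \in \sor(t,k)\}$ is a basis of $V^{\mathrm{sym}}$. (Alternatively, once spanning is established, independence is automatic by a dimension count, since $\dim V^{\mathrm{sym}}$ equals the number of $S_k$-orbits by the standard fact that the orbit-indicator functions form a basis of $V^{\mathrm{sym}}$.)

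Finally, for the size claim: by the bijection above, the number of sorted tuples equals the number of multisets of size $k$ from a set of size $2^t$, which by stars and bars is $\binom{k + 2^t - 1}{2^t - 1}$. I do not expect a genuine obstacle here; the only points requiring care are the bookkeeping that ``sorted'' selects exactly one representative per $S_k$-orbit (so that the count is correct and the $m_a$ are genuinely pairwise distinct) and the verification that $m_a$, defined as a sum of \emph{distinct} monomials, is really the full orbit sum with every coefficient equal to $1$ — it is this last fact that makes the disjoint-support argument run and delivers spanning and independence at once.
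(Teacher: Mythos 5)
Your proposal is correct and follows essentially the same route as the paper: spanning comes from the $\R$-multilinear representation together with Lemma \ref{Lem:permCoeff} (coefficients constant on $S_k$-orbits), independence from the fact that distinct $m_a$'s have disjoint monomial supports, and the count from identifying sorted tuples with multisets (non-decreasing sequences) of size $k$ over $\rn^t$, counted by stars and bars. The only difference is that you spell out the orbit-representative and orbit-sum bookkeeping that the paper dismisses as straightforward.
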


\begin{proof}
  It is straightforward to see that all the possible monomial $k$-symmetric polynomials belong to $\left\{m_a(x) : a \in \sor(t,k)\right\}$, and that no two elements in this set have a monomial in common. Thus, it is a basis for the $k$-symmetric functions.

  Consider the total order $\prec$ over $\rn^t$ defined as $a_u \prec a_{u'}$ if and only if $\abs{a_u} \leq \abs{a_{u'}}$, or $\abs{a_u} = \abs{a_{u'}}$ and $a_u \leq_{lex} a_{u'}$. Each $a \in \sor(t,k)$ can be seen as a (distinct) non-decreasing sequence of length $k$ from the totally ordered set $(\rn^t,\prec)$ of size $2^t$. The total number of such sequences is known to be $\binom{k + 2^t - 1}{2^t-1}$.
\end{proof}

Finally, given a parameter $n$, we want the coefficients of the $k$-symmetric functions in the chosen basis to be less than $n^c$ for some constant $c$ independent of $k$ and $t$ (requirement (d)). To this end, it suffices to reformulate the previous results over a field $\F_p$, for some prime $p \in (n,2n)$. We obtain the following polynomial representation for $k$-symmetric functions:

\begin{proposition}
  \label{Prop:polyRep}
  Any $k$-symmetric function $g : (\rn^t)^k \ra \rn$ can be written as
    \[g(x) = \sum\limits_{a \in \sor(t,k)} c_a(g) \cdot m_a(x) \mod p\]
  where $p \in (n,2n)$ is prime, $c_a(g) \in \F_p$ and $m_a$ is the monomial $k$-symmetric polynomial corresponding to the sorted tuple $a$. Moreover, $\sor(t,k)$ has size $\binom{k + 2^t - 1}{2^t-1}$.
\end{proposition}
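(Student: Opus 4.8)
The plan is to combine the $\R$-multilinear representation (\ref{Eq:RmultiRep}) with the symmetry of $g$ to obtain an \emph{integer} representation in the monomial $k$-symmetric basis, and then simply reduce modulo a suitable prime.

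First I would observe that when $g$ is Boolean the coefficients $\widehat{g}(a)$ given by the Möbius inversion formula (\ref{Eq:interpCoeff}) are integers, since each $g(a') \in \rn$ and the formula only combines these values with signs. By Lemma \ref{Lem:permCoeff}, $\widehat{g}(a) = \widehat{g}(\sigma(a))$ for every $\sigma \in S_k$, so the monomials appearing in (\ref{Eq:RmultiRep}) can be grouped into $S_k$-orbits (under permutation of the $k$ rows), each orbit carrying a common coefficient. Since the monomials $x^b$, $b \in (\rn^t)^k$, are pairwise distinct — this is immediate from $x^b = \prod_{(u,v):\, b_{u,v}=1} x_{u,v}$, which records the full support of $b$ — the sum of the monomials over the orbit of $a$ is exactly $m_a(x)$, and each orbit contains a unique sorted representative lying in $\sor(t,k)$ (Definition \ref{Def:ordered}, cf. Lemma \ref{Lem:symBasis}). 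Hence over $\Z$ we already have $g(x) = \sum_{a \in \sor(t,k)} \widehat{g}(a)\, m_a(x)$ for all $x \in (\rn^t)^k$.

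Next, by Bertrand's postulate there is a prime $p$ with $n < p < 2n$. Setting $c_a(g) := \widehat{g}(a) \bmod p \in \F_p$ and reducing the integer identity above modulo $p$ via the ring homomorphism $\Z \to \F_p$ yields $g(x) = \sum_{a \in \sor(t,k)} c_a(g)\, m_a(x) \bmod p$ for all $x$. Because $p \geq 2$ the residues of $0$ and $1$ stay distinct in $\F_p$, so this congruence still determines $g(x)$; taking $p > n$ moreover gives the stronger embedding property needed later (so that sums $\sum_j g_j(x^{(j)})$ with values in $\{0,\dots,n\}$ are faithfully represented). The size claim $\abs{\sor(t,k)} = \binom{k + 2^t - 1}{2^t - 1}$ is precisely the second assertion of Lemma \ref{Lem:symBasis}, and the coefficient bound (requirement (d)) is automatic since $c_a(g) < p < 2n$.

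There is essentially no deep obstacle here: the argument is just ``apply symmetry to collapse the monomial basis, then reduce mod $p$''. The only points deserving care are the bookkeeping of the orbit grouping (that distinct tuples give distinct monomials, so no two orbits overlap and no monomial is counted twice), which I isolated above, and checking that the passage to $\F_p$ does not destroy the property that the right-hand side computes a $\{0,1\}$-valued function — which holds as soon as $p > 1$. Everything else follows from the already-established Lemmas \ref{Lem:permCoeff} and \ref{Lem:symBasis}.
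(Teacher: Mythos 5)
Your proposal is correct and follows essentially the same route the paper intends: combine the $\R$-multilinear expansion with Lemma \ref{Lem:permCoeff} to collapse onto the monomial $k$-symmetric basis indexed by $\sor(t,k)$ (Lemma \ref{Lem:symBasis}), then reduce the integer coefficients modulo a prime $p \in (n,2n)$. Your added care about integrality of the Möbius coefficients, injectivity of $b \mapsto x^b$, and Bertrand's postulate fills in details the paper leaves implicit, but introduces nothing different in substance.
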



\section[Simultaneous Protocol for SymSymZd]{Simultaneous Protocol for $\sym \circ \protect\vect{\sym}_{\Z_d}$}
\label{Sec:mainSec}

We now describe in detail our simultaneous protocol for symmetric composed functions. The result is divided into two parts. We first give in Section \ref{Sec:equPart} a protocol of cost $\bo{k(k+d)^{d-1} \log n}$ for $\sym \circ \sym_{\Z_d}$ when $k \geq 4^{d-1} \log n$. This is a generalization of the idea of \cite{BGKL04}, which was based on solving a particular equation. We build upon this result in Section \ref{Sec:polyPart} to give an efficient protocol of cost $\bo{\log^{4d}(n)}$ for $\sym \circ \vect{\sym}_{\Z_d}$ when $k \geq 4^{2d} \log n$ and $d$ is constant. This last result uses the protocol of Theorem \ref{Thm:symSym} as a subroutine, and the polynomial representations described in Section \ref{Sec:polyRep}.


  \subsection{The Equation Solving part}
  \label{Sec:equPart}

We extend the protocol for $\sym \circ \sym_{\Z_2}$ from \cite{BGKL04} to any $d > 1$. It applies to all functions in $\sym \circ \sym_{\Z_d}$ as long as $k \geq 4^{d-1} \log n$, but it is not efficient if $d$ is nonconstant or if the number $k$ of players is super-polylogarithmic (we will remove this last condition in the next section). For convenience in the proof, we state the result over $\Z_{d+1}$ instead of $\Z_d$:


\begin{theorem}
  \label{Thm:symSym}
  Let $M$ be a $k \times n$ matrix over $\Z_{d+1}$, where $n \geq 2$ and $d \geq 1$. For $0 \leq e_1 + \dots + e_d \leq k$, denote $y_{e_1,\dots,e_d}$ the number of columns of $M$ with exactly $e_s$ occurrences of each $s \in \Z_{d+1} \backslash \{0\}$. For each $i = 1, \dots, k$, let player $i$ see all of $M$ except row $i$. If $k \geq 4^d \log n$ then there exists a deterministic simultaneous NOF protocol of cost $k \binom{k+d}{d} \lceil\log n\rceil$, at the end of which the referee knows all the $y_{e_1,\dots,e_d}$'s.
\end{theorem}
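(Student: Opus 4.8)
The strategy is to generalize the single-equation argument of Babai et al. For each player $i$, let $a^i_{e_1,\dots,e_d}$ denote the number of columns of $M$ that contain, from player $i$'s (partial) view of the column, exactly $e_s$ occurrences of each $s \in \Z_{d+1}\setminus\{0\}$ — player $i$ sees only $k-1$ entries of each column, so $0 \le e_1+\dots+e_d \le k-1$. Each $a^i_{e_1,\dots,e_d}$ is known to player $i$, who simply sends all of them to the referee; this costs at most $k \binom{k+d}{d}\lceil \log n\rceil$ bits, matching the claimed bound (each count is in $\{0,\dots,n\}$, and the number of admissible tuples $(e_1,\dots,e_d)$ with sum $\le k-1$ is at most $\binom{k+d}{d}$). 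The entire content of the theorem is then the claim that, given all these $a^i$-values, the referee can uniquely reconstruct the $y_{e_1,\dots,e_d}$'s.

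The key step is a counting identity relating the $a^i$'s to the $y$'s. A column of $M$ of ``type'' $(e_1,\dots,e_d)$ (meaning it has $e_s$ entries equal to $s$ and $e_0 = k - \sum_s e_s$ entries equal to $0$) looks, from player $i$'s perspective, like a column of type $(e_1,\dots,e_{s}-1,\dots,e_d)$ whenever the hidden entry in row $i$ equals $s\neq 0$, and like a column of type $(e_1,\dots,e_d)$ with one fewer zero when the hidden entry is $0$. Summing over all $k$ players, a column genuinely of type $(e_1,\dots,e_d)$ is counted, across the various $a^i$'s, according to which symbol sits in each of its $k$ rows: precisely $e_s$ of the players see it as type with the $s$-coordinate decremented, and $e_0$ players see it as the same $d$-tuple. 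This yields, for every tuple $(f_1,\dots,f_d)$ with $\sum f_s \le k-1$,
\[
  \sum_{i=1}^k a^i_{f_1,\dots,f_d} \;=\; \bigl(k - \textstyle\sum_s f_s\bigr)\, y_{f_1,\dots,f_d} \;+\; \sum_{s=1}^d (f_s+1)\, y_{f_1,\dots,f_s+1,\dots,f_d},
\]
where a $y$-term is read as $0$ if its index is out of range. Call the left-hand side $A_{f_1,\dots,f_d}$; it is a quantity the referee can compute. I would prove this identity carefully by the double-counting argument just sketched (counting pairs (column, player) weighted appropriately), since getting the multiplicities $e_0 = k-\sum f_s$ and $f_s+1$ right is exactly where the argument lives.

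It remains to show this linear system determines the $y$'s uniquely over the integers. I would argue by induction on $|f| := \sum_s f_s$, going \emph{downward} from $|f| = k$. For $|f| = k$ there is nothing above, but these equations have vanishing coefficient $k - |f| = 0$ on $y_f$ itself — so instead I run the induction the other way is not immediate. The cleaner route, following Babai et al., is to combine the equations into a \emph{single} equation whose only integral solution is the true vector of $y$'s: multiply equation $A_f$ by a suitable large weight $W_f$ (e.g. $W_f = R^{\,k - |f|}$ for $R$ slightly larger than $n$, or a multinomial-coefficient weighting) and sum over all $f$, so that in the resulting sum the coefficient of each $y_e$ is a positive integer, and the weights are spread out enough that any other nonnegative integer solution would force one coordinate to differ by at least the ``base'' $R$, contradicting $0\le y_e \le n$. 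The hypothesis $k \ge 4^d \log n$ is what guarantees there is enough room: the number of variables is about $\binom{k+d}{d}$, the spreading factor must exceed $n$, and one checks $k \ge 4^d\log n$ makes the telescoping/base-$R$ bookkeeping go through. The main obstacle, and the part requiring genuine care, is precisely this last step — designing the weights $W_f$ and verifying that the single aggregated equation has the $y$'s as its unique solution in $\{0,\dots,n\}^{\binom{k+d}{d}}$ under the stated bound on $k$; the combinatorial identity above, while the conceptual heart, is routine once set up correctly.
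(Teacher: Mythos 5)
Your communication phase and the counting identity are exactly the paper's: each player sends the $\binom{k+d}{d}$ counts $a^i_{e_1,\dots,e_d}$, and the double-counting argument yielding $(k-\sum_s f_s)\,y_{f_1,\dots,f_d}+\sum_{s=1}^d (f_s+1)\,y_{f_1,\dots,f_s+1,\dots,f_d}=\sum_i a^i_{f_1,\dots,f_d}$ is the same as the paper's Equation (3). But the entire mathematical content of the theorem is the uniqueness of the bounded nonnegative integral solution of this system (the system alone is underdetermined: there are $\binom{k+d}{d}$ unknowns and only $\binom{k+d-1}{d}$ equations, so its kernel is nontrivial and integrality plus $0\le y$, $\sum y\le n$ must be used). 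You explicitly leave this step as a plan, and the concrete plan you sketch does not work. If you multiply the equation indexed by $f$ by a weight $W_f=R^{k-|f|}$ and sum, the aggregated coefficient of $y_e$ with $|e|=m$ is $(k-m)R^{k-m}+\bigl(\sum_{s:e_s\ge 1}e_s\bigr)R^{k-m+1}=(k-m)R^{k-m}+mR^{k-m+1}$, which depends only on the level $m=|e|$. Hence for $d\ge 2$ a single aggregated equation with any level-dependent weights cannot distinguish two solutions having the same level sums but different distributions within a level (e.g.\ one column of type $(2,0)$ versus one of type $(1,1)$ for $d=2$), so no choice of $R$ can make the ``base-$R$ digit'' argument go through; the within-level separation is precisely the new difficulty when passing from $d=1$ to $d\ge 2$. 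Choosing finer, $f$-dependent weights is exactly the unproved design problem you defer, so the proposal has a genuine gap at the heart of the theorem.

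For comparison, the paper does not aggregate into one equation. It takes the difference $z$ of two hypothetical solutions, which satisfies the homogeneous system with $\sum|z_{e_1,\dots,e_d}|\le 2n$, and shows by induction on $d$ (Appendix A) that no nonzero such integral $z$ exists when $k\ge 4^d\log n$: the smallest level $i$ with a nonzero entry must satisfy $i\le 4^{d-1}\log n-(d-1)$ (by applying the induction hypothesis to the slice of variables at that level), the level maxima $Z_i=\max_{|e|=i}|z_e|$ obey $(k-i)Z_i\le (i+d)Z_{i+1}$ and hence grow like ratios of binomial coefficients, and an entropy estimate then forces $\sum_i Z_i>2n$, a contradiction. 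Your write-up would need this (or an equivalently careful) kernel analysis to be a proof.
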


\begin{proof}
  The communication part of the protocol is pretty simple: each player $i$ sends to the referee the number of columns $a^i_{e_1,\dots,e_d}$ which contain, \textit{from her point of view} (i.e. without taking row $i$ into account), exactly $e_s$ occurrences of each element $s \in \{1, \dots, d\}$ (for all $e_1 + \dots + e_d \leq k-1$).

  The referee computes then $b_{e_1,\dots,e_d} = \sum_{i=1}^k a^i_{e_1,\dots,e_d}$ (for all $e_1 + \dots + e_d \leq k-1$). The important thing to note is that these numbers must verify the following equalities:
  \begin{equation}
    \label{Eq:simultEq}
    \left\{
      \def\arraystretch{1.5}
      \begin{array}{l l}
        (k-(e_1 + \dots + e_d)) y_{e_1,\dots,e_d} + \sum\limits_{s=1}^d (e_s + 1) y_{e_1,\dots,e_{s-1},e_s+1,e_{s+1},\dots,e_d} = b_{e_1,\dots,e_d} \\
        0 \leq e_1 + \dots + e_d \leq k-1
      \end{array}
    \right.
  \end{equation}

  To see why it is true, consider a column $C$ of $M$ that contributes to a given $b_{e_1,\dots,e_d}$. Either $C$ contains exactly $e_s$ occurrences of each element $s \in \{1, \dots, d\}$, or there is one $s' \in \{1, \dots, d\}$ that occurs $e_{s'} + 1$ times in $C$ (the other $s$ having exactly $e_s$ occurrences in $C$). In the first case, $C$ contributes to $y_{e_1,\dots,e_d}$ and to the quantity $a_i(e_1,\dots,e_d)$ of each player $i$ having a $0$ entry of $C$ on her forehead (there are $k-(e_1 + \dots + e_d)$ such players). In the second case, $C$ contributes to $y_{e_1,\dots,e_{s'-1},e_{s'}+1,e_{s'+1},\dots,e_d}$ and to the quantity $a_i(e_1,\dots,e_d)$ of each player $i$ having a $s'$ entry of $C$ on her forehead (there are $e_{s'} + 1$ such players). Thus, the total contribution for $b_{e_1,\dots,e_d}$ is $(k-(e_1 + \dots + e_d)) y_{e_1,\dots,e_d} + \sum_{s'=1}^d (e_{s'} + 1) y_{e_1,\dots,e_{s'-1},e_{s'}+1,e_{s'+1},\dots,e_d}$.

  Equalities (\ref{Eq:simultEq}) can be seen as a system of equations whose unknowns are the $y_{e_1,\dots,e_d}$'s. Since the referee is not computationally restricted she can enumerate all the integral solutions, but she does not know which one corresponds to matrix $M$. The key lemma is to show that Equations (\ref{Eq:simultEq}), under mild constraints
  \begin{equation}
    \label{Eq:simulCons}
    y_{e_1,\dots,e_d} \geq 0,\ 0 \leq e_1 + \dots + e_d \leq k \quad \textit{and} \quad \sum\limits_{e_1 + \dots + e_d \leq k} y_{e_1,\dots,e_d} \leq n
  \end{equation}
 have at most one integral solution when $k \geq 4^d \log n$. We prove it by induction on $d$ (the base case $d = 1$ corresponds to the work of \cite{BGKL04}, the induction step is more involved and is given in Appendix \ref{App:babaiGen}). Consequently, the referee is able to know unambiguously the correct $y_{e_1,\dots,e_d}$'s that correspond to $M$.

  This protocol is clearly simultaneous since the players do not need to talk to each other. Each of the $k$ players sends $\binom{k+d}{d}$ numbers $a_i(e_1,\dots,e_d) \leq n$. Thus the total communication cost is at most $k \binom{k+d}{d} \lceil\log n\rceil$.
\end {proof}

\begin{corollary}
  \label{Cor:symSym}
  Let $n \geq 2$, $d \geq 2$ and suppose $k \geq 4^{d-1} \log n$. There is a deterministic simultaneous NOF protocol of cost $k \binom{k+d-1}{d-1} \lceil\log n\rceil$, at the end of which the referee can compute \emph{all} composed functions $f \circ g \in \sym \circ \sym_{\Z_d}$ of her choice.
\end{corollary}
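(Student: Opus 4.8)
The plan is to deduce Corollary~\ref{Cor:symSym} directly from Theorem~\ref{Thm:symSym} by a simple re-indexing of the alphabet. We are given a function $f \circ g \in \sym \circ \sym_{\Z_d}$ acting on a $k \times n$ matrix $M$ over $\Z_d$; set $d' = d - 1$, so that $\Z_d = \Z_{d'+1}$ and $d' \geq 1$. The hypothesis $k \geq 4^{d-1}\log n = 4^{d'}\log n$ is exactly the hypothesis needed to invoke Theorem~\ref{Thm:symSym} with parameter $d'$. Applying that theorem, the players run the simultaneous protocol of cost $k\binom{k+d'}{d'}\lceil\log n\rceil = k\binom{k+d-1}{d-1}\lceil\log n\rceil$, at the end of which the referee knows, for every $(e_1,\dots,e_{d-1})$ with $0 \le e_1+\dots+e_{d-1}\le k$, the number $y_{e_1,\dots,e_{d-1}}$ of columns of $M$ in which each nonzero symbol $s \in \{1,\dots,d-1\}$ occurs exactly $e_s$ times (and hence $0$ occurs exactly $k-(e_1+\dots+e_{d-1})$ times).

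The remaining point is that the profile $(y_{e_1,\dots,e_{d-1}})$ determines $f \circ g(M)$ for every symmetric $f$ and symmetric $g$. First I would observe that since $g : \Z_d^k \to \{0,1\}$ is symmetric, its value on a column $C$ depends only on the multiset of entries of $C$, i.e. only on the occurrence vector $(e_1,\dots,e_{d-1})$ of $C$; write $g_{e_1,\dots,e_{d-1}}$ for this common value. Hence the number of columns $j$ with $g(C_j) = 1$ equals $\sum_{(e_1,\dots,e_{d-1})\,:\,g_{e_1,\dots,e_{d-1}}=1} y_{e_1,\dots,e_{d-1}}$, a quantity the referee can compute from the $y$'s (she knows $g$, as the statement says ``$f\circ g$ of her choice''). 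Since $f : \{0,1\}^n \to \{0,1\}$ is symmetric, $f(g(C_1),\dots,g(C_n))$ depends only on the Hamming weight of $(g(C_1),\dots,g(C_n))$, which is precisely the sum just computed. Thus the referee evaluates $f$ on any string of that weight to recover $f \circ g(M)$, and she may do so for every symmetric pair $(f,g)$ she likes, since the only information transmitted is the alphabet-profile of $M$.

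There is really no obstacle here — the corollary is a packaging of Theorem~\ref{Thm:symSym} — so the ``hard part'' is merely bookkeeping: being careful with the shift $d \mapsto d-1$ (the theorem is stated over $\Z_{d+1}$, the corollary over $\Z_d$), and noting that the protocol's communication is the same regardless of which $(f,g)$ the referee ultimately wants, because the players never need to know $f$ or $g$ — they only report the counts $a^i_{e_1,\dots,e_{d-1}}$. I would state the cost as $k\binom{k+d-1}{d-1}\lceil\log n\rceil$ to match the theorem with $d'=d-1$, and remark that the referee's computation of $f\circ g$ is information-theoretic (she is not computationally restricted), so nothing is added to the communication cost.

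\begin{proof}
  Set $d' = d-1 \geq 1$, so $\Z_d = \Z_{d'+1}$, and note that the hypothesis $k \geq 4^{d-1}\log n = 4^{d'}\log n$ is exactly the one required by Theorem~\ref{Thm:symSym} with parameter $d'$. Running that protocol on the input matrix $M \in \Z_d^{k\times n}$ has cost $k\binom{k+d'}{d'}\lceil\log n\rceil = k\binom{k+d-1}{d-1}\lceil\log n\rceil$, and at the end the referee knows every $y_{e_1,\dots,e_{d-1}}$, the number of columns of $M$ in which each $s \in \{1,\dots,d-1\}$ occurs exactly $e_s$ times.

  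Now fix any $f \circ g \in \sym \circ \sym_{\Z_d}$ the referee wishes to compute. Since $g$ is symmetric, $g(C)$ for a column $C \in \Z_d^k$ depends only on the occurrence vector $(e_1,\dots,e_{d-1})$ of its nonzero symbols; denote this value $g_{e_1,\dots,e_{d-1}} \in \{0,1\}$. Hence the referee can compute
    \[ w = \sum_{(e_1,\dots,e_{d-1})\,:\, g_{e_1,\dots,e_{d-1}} = 1} y_{e_1,\dots,e_{d-1}}, \]
  which is the number of columns $C_j$ of $M$ with $g(C_j) = 1$, i.e. the Hamming weight of the string $(g(C_1),\dots,g(C_n)) \in \{0,1\}^n$. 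As $f$ is symmetric, $f(g(C_1),\dots,g(C_n))$ depends only on this weight $w$, so the referee recovers $f\circ g(M)$ by evaluating $f$ on any $n$-bit string of weight $w$. This computation uses no further communication, so the total cost is $k\binom{k+d-1}{d-1}\lceil\log n\rceil$, and the protocol is simultaneous since it is the one from Theorem~\ref{Thm:symSym}.
\end{proof}
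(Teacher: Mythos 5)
Your proof is correct and is exactly the argument the paper intends (the corollary is stated without proof as an immediate consequence of Theorem~\ref{Thm:symSym}): apply the theorem with $d' = d-1$, then use symmetry of $g$ to reduce each column's value to its occurrence vector and symmetry of $f$ to reduce the output to the weight $w$ recoverable from the $y_{e_1,\dots,e_{d-1}}$'s. The index shift, the cost, and the observation that the players' messages are independent of $(f,g)$ are all handled correctly.
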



This result can also be adapted to the case of $k < 4^d \log n$ players by splitting the initial matrix into sufficiently many parts. Previously, Ada \emph{et. al.} \cite{ACFN15} also generalized their work to any number $k$ of players, by giving a protocol of cost $\bo{n/2^k \cdot \log n + k \log n}$ for $\sym \circ \vect{\any}$. However, it was not simultaneous and it does not apply to $t > 1$.

\begin{proposition}
  \label{Prop:symSymBelow}
  Let $M$ be a $k \times n$ matrix over $\Z_{d+1}$, where $n \geq 2$ and $d \geq 1$. For $0 \leq e_1 + \dots + e_d \leq k$, denote $y_{e_1,\dots,e_d}$ the number of columns of $M$ with exactly $e_s$ occurrences of each $s \in \Z_{d+1} \backslash \{0\}$. For each $i = 1, \dots, k$, let player $i$ see all of $M$ except row $i$. If $4^d \leq k < 4^d \log n$ then there exists a deterministic simultaneous NOF protocol of cost at most $\bo{\frac{n}{2^{k/4^d}} \cdot (k+d)^{d+2}}$, at the end of which the referee knows all the $y_{e_1,\dots,e_d}$'s.
\end{proposition}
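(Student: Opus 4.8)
The plan is to reduce the case of few players ($4^d \le k < 4^d \log n$) to the case of many players handled in Theorem \ref{Thm:symSym}, by splitting the columns of $M$ into several groups and applying the protocol to each group after conceptually enlarging the number of players. Concretely, partition the $n$ columns of $M$ into $N$ submatrices $M^{(1)}, \dots, M^{(N)}$, each of width at most $n/N$, where $N$ is chosen so that $n/N$ is small enough to make $k \ge 4^d \log(n/N)$ hold. Since $k \ge 4^d$, it suffices to take $n/N \le 2^{k/4^d}$, i.e. $N = \lceil n / 2^{k/4^d} \rceil$. For each submatrix $M^{(r)}$ of width $n_r \le n/N$ we invoke Theorem \ref{Thm:symSym} (applied with the same $k$ players, each still blind to her own row), which now qualifies since $k \ge 4^d \log n_r$: the referee learns all the counts $y^{(r)}_{e_1,\dots,e_d}$ for that block, at cost $k \binom{k+d}{d} \lceil \log n_r \rceil$. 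Summing $y_{e_1,\dots,e_d} = \sum_{r=1}^N y^{(r)}_{e_1,\dots,e_d}$ over the blocks recovers the global counts.

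The remaining step is the cost bookkeeping. The total communication is at most
\[
  N \cdot k \binom{k+d}{d} \lceil \log(n/N) \rceil
  \;\le\; \left\lceil \frac{n}{2^{k/4^d}} \right\rceil \cdot k \binom{k+d}{d} \cdot \frac{k}{4^d}.
\]
Here I used $\log(n/N) \le k/4^d$ from the choice of $N$. Now bound the remaining factors crudely: $\binom{k+d}{d} \le (k+d)^d$, and $k \cdot (k/4^d) \le k^2 \le (k+d)^2$, so the product $k \binom{k+d}{d}(k/4^d) \le (k+d)^{d+2}$. The ceiling contributes at most a factor of $2$ (since $n/2^{k/4^d} \ge 1$ when $k \le 4^d \log n$), which is absorbed into the $\bo{\cdot}$. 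This yields the claimed bound $\bo{\frac{n}{2^{k/4^d}} \cdot (k+d)^{d+2}}$. Simultaneity is immediate: in each of the $N$ invocations every player just sends numbers to the referee without any interaction, so the whole protocol is simultaneous (each player sends the concatenation of her messages for the $N$ blocks).

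I do not expect a serious obstacle here — the argument is a standard "divide into blocks to boost the effective player/input ratio" trick, and the only care needed is (i) checking that $k \ge 4^d \log n_r$ genuinely holds for the chosen block width, which follows from $n_r \le 2^{k/4^d}$ and $\log n_r \le k/4^d$, and (ii) verifying that the ceiling $\lceil n/2^{k/4^d} \rceil$ does not blow up the bound, which is fine because $k < 4^d \log n$ forces $n/2^{k/4^d} > 1$. The mildly delicate point, if any, is ensuring the blocks can be taken of width \emph{exactly} $\lceil n/N \rceil$ or $\lfloor n/N \rfloor$ so that each satisfies the hypothesis of Theorem \ref{Thm:symSym} with $n_r \ge 2$; when $n/2^{k/4^d}$ is close to $1$ one may simply merge undersized blocks, which only helps the cost. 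As with Theorem \ref{Thm:symSym}, the same protocol lets the referee evaluate any $f \circ g \in \sym \circ \sym_{\Z_{d+1}}$ of her choice from the recovered counts.
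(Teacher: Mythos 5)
Your proposal is correct and follows essentially the same route as the paper: split $M$ into roughly $n/2^{k/4^d}$ submatrices of width about $\lfloor 2^{k/4^d}\rfloor$, apply Theorem \ref{Thm:symSym} to each, recombine the counts, and bound the total cost by $\bo{\frac{n}{2^{k/4^d}} \cdot k\binom{k+d}{d} \cdot \frac{k}{4^d}} = \bo{\frac{n}{2^{k/4^d}}(k+d)^{d+2}}$. Your bookkeeping (including the ceiling and the $n_r \geq 2$ edge case) is a slightly more detailed version of the paper's sketch.
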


\begin{proof}
  We split $M$ into $\left\lceil \frac{n}{\left\lfloor 2^{k/4^d} \right\rfloor} \right\rceil$ matrices, each of size $k \times \left\lfloor 2^{k/4^d} \right\rfloor$ (except one matrix that can have less columns). These matrices have few enough columns to apply (separately) the protocol of Theorem \ref{Thm:symSym} on them. The $y_{e_1,\dots,e_d}$'s for the original matrix $M$ are computed by recombining all the obtained results. The total cost is $\bo{\frac{n}{2^{k/4^d}} \cdot k \binom{k+d}{d} \log\left(2^{k/4^d}\right)}$.
\end{proof}


  \subsection{The Polynomial Representation part}
  \label{Sec:polyPart}

Using the polynomial representation of Proposition \ref{Prop:polyRep}, we give a protocol that improves upon Corollary \ref{Cor:symSym} in two ways: it is still efficient when $k$ is super-polylogarithmic, and the inner functions $g_1, \dots, g_n$ can be different (i.e. it applies to $\sym \circ \vect{\sym}_{\Z_d}$ instead of $\sym \circ \sym_{\Z_d}$).



\begin{theorem}
  \label{Thm:symVecSym}
  Let $n \geq 2$, $d \geq 2$ and suppose $k \geq 4^{2^{\lceil \log d \rceil}} \log n$. For any composed function $f \circ \vec{g} \in \sym \circ \vect{\sym}_{\Z_d}$ there exists a deterministic simultaneous NOF protocol that computes it with cost $4^{2^{\lceil \log d \rceil + 2}} \log^{2 \cdot 2^{\lceil \log d \rceil}}(n)$.
\end{theorem}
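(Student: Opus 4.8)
The plan is to bootstrap the equation‑solving protocol of Theorem~\ref{Thm:symSym} by combining it with the polynomial representation of Proposition~\ref{Prop:polyRep}, together with a reduction that replaces the (possibly super‑polylogarithmic) number $k$ of players by a small number $\ell$ of \emph{active} players, at the price of turning the common inner function into $n$ distinct --- but still symmetric --- inner functions.

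\emph{Normalisations.} First I reduce to $d$ a power of two: writing $t=\lceil\log d\rceil$ and $d'=2^t\ge d$, I extend each symmetric $g_j\colon\Z_d^k\ra\{0,1\}$ to a symmetric $g_j'\colon\Z_{d'}^k\ra\{0,1\}$ --- for instance, outputting $0$ on every tuple that uses a value $\ge d$, a permutation‑invariant condition. Since every input matrix lies in $\Z_d^{k\times n}\subseteq\Z_{d'}^{k\times n}$ we have $f\circ\vec{g}=f\circ\vec{g'}$ on all relevant inputs, so it is enough to treat $\sym\circ\vect{\sym}_{\Z_{d'}}$; this is the source of the $2^{\lceil\log d\rceil}$ in the statement. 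Next I pick a threshold $\ell\le k$ (fixed below) and absorb the bottom $k-\ell$ rows into the inner functions: for each column $j$ let $v_j\in\Z_{d'}^{k-\ell}$ be its entries in rows $\ell+1,\dots,k$ and set $\widetilde g_j(u):=g_j'(u\cdot v_j)$ for $u\in\Z_{d'}^\ell$. Then every $\widetilde g_j$ is $\ell$‑symmetric and $f\circ\vec{g}(M)=f(\widetilde g_1(\widetilde C_1),\dots,\widetilde g_n(\widetilde C_n))$, where $\widetilde C_1,\dots,\widetilde C_n$ are the columns of the top $\ell\times n$ submatrix $\widetilde M$. Crucially, rows $\ell+1,\dots,k$ are common knowledge of players $1,\dots,\ell$ (player $i\le\ell$ sees every row but row $i$), so each of them knows all of $\widetilde g_1,\dots,\widetilde g_n$ while the referee knows none; from now on only players $1,\dots,\ell$ speak.

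\emph{Symmetry of $f$ and the polynomial representation.} Since $f$ is symmetric, $f(r_1,\dots,r_n)=F\bigl(\sum_j r_j\bigr)$ for some $F\colon\{0,\dots,n\}\ra\{0,1\}$, so the referee only needs the integer $R:=\sum_{j=1}^n\widetilde g_j(\widetilde C_j)\in\{0,\dots,n\}$. Fix a prime $p\in(n,2n)$ (by Bertrand's postulate); then $R$ is recovered from $R\bmod p$. By Proposition~\ref{Prop:polyRep} each $\widetilde g_j$ satisfies $\widetilde g_j(u)=\sum_{a\in\sor(t,\ell)}c_a(\widetilde g_j)\,m_a(u)\bmod p$ with $c_a(\widetilde g_j)\in\F_p$ and $|\sor(t,\ell)|=\binom{\ell+d'-1}{d'-1}$, whence
\[
R\ \equiv\ \sum_{a\in\sor(t,\ell)}\ \underbrace{\sum_{j=1}^n c_a(\widetilde g_j)\,m_a(\widetilde C_j)}_{=:\,s_a}\ \pmod p .
\]
It therefore suffices that the referee learn each $s_a\bmod p$. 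Note that the players (not the referee) can compute every coefficient $c_a(\widetilde g_j)$, since they know the functions $\widetilde g_j$.

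\emph{Each $s_a$ as an instance of Theorem~\ref{Thm:symSym}.} Fix $a\in\sor(t,\ell)$. Because $m_a$ is a symmetric polynomial, $m_a(\widetilde C_j)$ depends only on the profile $e=(e_1,\dots,e_{d'-1})$ of $\widetilde C_j$ (with $e_s$ the number of entries equal to $s\in\Z_{d'}\setminus\{0\}$), a value the (unbounded) referee can evaluate. Hence $s_a=\sum_e m_a(e)\,N^{(a)}_e$ with $N^{(a)}_e:=\sum_{j\colon\mathrm{prof}(\widetilde C_j)=e}c_a(\widetilde g_j)$, so it is enough for the referee to learn all the $N^{(a)}_e$. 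But $\{N^{(a)}_e\}_e$ are exactly the column‑profile counts of the weighted matrix $\widetilde M_a\in\Z_{d'}^{\ell\times N_a}$ obtained from $\widetilde M$ by repeating column $\widetilde C_j$ exactly $c_a(\widetilde g_j)\in\{0,\dots,p-1\}$ times, where $N_a=\sum_j c_a(\widetilde g_j)<2n^2$. Since the multiplicities are common knowledge of players $1,\dots,\ell$ and player $i\le\ell$ sees all of $\widetilde M_a$ but row $i$, these $\ell$ players can jointly simulate the deterministic simultaneous protocol of Theorem~\ref{Thm:symSym} on $\widetilde M_a$ (applied with parameter $d'-1$ in place of its ``$d$''), which is legal once $\ell\ge 4^{d'-1}\lceil\log N_a\rceil$ and costs $\ell\binom{\ell+d'-1}{d'-1}\lceil\log N_a\rceil$. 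Running this for every $a\in\sor(t,\ell)$ gives the referee all the $s_a$, hence $R$, hence $f\circ\vec{g}(M)=F(R)$.

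\emph{Parameters, cost, and the main obstacle.} Choose $\ell$ to be the least integer with $\ell\ge 4^{d'-1}\lceil\log(2n^2)\rceil$; then $\ell=\Theta(4^{d'}\log n)\le k$ by the hypothesis $k\ge 4^{2^{\lceil\log d\rceil}}\log n$, and Theorem~\ref{Thm:symSym} applies to every $\widetilde M_a$ (recall $N_a<2n^2$). The total communication is the number of instances $|\sor(t,\ell)|=\binom{\ell+d'-1}{d'-1}$ times the per‑instance cost $\ell\binom{\ell+d'-1}{d'-1}\lceil\log(2n^2)\rceil$; bounding $\binom{\ell+d'-1}{d'-1}=\bo{\ell^{d'-1}}$ via Lemma~\ref{Lem:symBasis} and substituting $\ell=\Theta(4^{d'}\log n)$ yields a bound of the claimed shape $4^{2^{\lceil\log d\rceil+2}}\log^{2\cdot 2^{\lceil\log d\rceil}}(n)$, the exponent of $\log n$ being $2(d'-1)+1+1=2d'$ once the $\ell$‑factor and the $\lceil\log(2n^2)\rceil$‑factor are counted. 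Simultaneity is immediate: each player speaks once, only to the referee, with a message determined by her own view and the public data $n,d,p,\sor(t,\ell)$. The main obstacle --- and what genuinely separates this from Theorem~\ref{Thm:symSym}/Corollary~\ref{Cor:symSym} --- is twofold: (i) the player reduction, where one must verify that after absorbing the bottom rows each of the $\ell$ active players can reconstruct every weighted matrix $\widetilde M_a$ from its NOF view (this is precisely what forces the inner functions to become distinct, hence the passage from $\sym\circ\sym_{\Z_d}$ to $\sym\circ\vect{\sym}_{\Z_d}$, and it requires the derived coefficients $c_a(\widetilde g_j)$ to be common knowledge of those players); and (ii) the cost bookkeeping, keeping $\ell$ as small as the $4^{d'-1}\log N_a$ requirement of Theorem~\ref{Thm:symSym} permits while controlling the basis size $\binom{\ell+d'-1}{d'-1}$, so that (number of instances)~$\times$~(per‑instance cost) stays within the stated bound.
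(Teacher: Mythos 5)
Your proposal is correct and follows essentially the same route as the paper's proof: reduce to $d'=2^{\lceil\log d\rceil}$, absorb the bottom $k-\ell$ rows into distinct symmetric inner functions for $\ell\approx 4^{d'}\log n$ active players, expand each $\widetilde g_j$ in the monomial symmetric basis over $\F_p$ with $p\in(n,2n)$ (Proposition~\ref{Prop:polyRep}), build for each basis element $a$ the duplicated matrix $\widetilde M_a$ with at most $2n^2$ columns, run Theorem~\ref{Thm:symSym} on each, and let the referee recombine modulo $p$ and use the symmetry of $f$. The only differences are cosmetic (working directly over the alphabet $\Z_{d'}$ instead of boolean blocks, and choosing $\ell$ minimally rather than as exactly $4^{2^t}\log n$), and your cost bookkeeping matches the paper's.
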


\begin{proof}
  Let $\vec{g} = (g_1,\dots,g_n)$. In order to use the polynomial representation of Section \ref{Sec:polyRep}, we change the range of the $g_j$ functions as $g_j : (\rn^t)^k \ra \rn$, where $t = \lceil \log d \rceil$. This requires to encode each number $x \in \Z_d$ as an element $\bar{x} \in \rn^t$. If $d$ is not a power of two then some $y \in \rn^t$ will not correspond to any $x \in \Z_d$. We extend each $g_j$ as the zero function on inputs that contain such numbers (note that the functions are still $k$-symmetric).

  The input is now a $k \times (t \cdot n)$ boolean matrix $M$. Each function $g_j$ acts on the $j^{th}$ block of $M$, which will be denoted $B_j \in (\rn^t)^k$. Let $\ell = 4^{2^t} \log n$, so that only the first $\ell$ players are going to speak. For each block $B_j$, if we let $v_j \in (\rn^t)^{(k-\ell)}$ be the sub-block occurring from row $\ell+1$ to $k$, then $g_j : (\rn^t)^k \rightarrow \{0,1\}$ induces a new function $\widetilde{g_j} : (\rn^t)^{\ell} \rightarrow \{0,1\}$ such that $\widetilde{g_j}(u) = g_j(u \cdot v_j)$. Moreover, $\widetilde{g_j}$ is still a symmetric function. Thus, our task reduces to find an efficient simultaneous protocol for $f \circ (\widetilde{g_1}, \dots, \widetilde{g_n})$ with $\ell = 4^{2^t} \log n$ players. We denote $\widetilde{M}$ the $\ell \times (t \cdot n)$ submatrix of $M$ on which we now work, and $\widetilde{B_j} \in (\rn^t)^{\ell}$ is the sub-block of $B_j$ corresponding to $\widetilde{M}$.

  We cannot apply directly the protocol of Theorem \ref{Thm:symSym}, since it only works for equal inner functions $\widetilde{g_1} = \dots = \widetilde{g_n}$. Instead, we use first Proposition \ref{Prop:polyRep} on the $\widetilde{g_j}$ functions: for each $j \in \{1, \dots, n\}$ there exist coefficients $(c_a(\widetilde{g_j}))_{a \in \sor(t,\ell)}$ over $\F_p$ such that $\widetilde{g_j}(x) = \sum_{a \in \sor(t,\ell)} c_a(\widetilde{g_j}) \cdot m_a(x) \mod p$ where $p \in (n,2n)$, $m_a$ is the monomial $k$-symmetric polynomial corresponding to the sorted tuple $a$ and $\abs{\sor(t,\ell)} = \binom{\ell + 2^t - 1}{2^t-1}$. The coefficients $c_a(\widetilde{g_j})$ are known by the first $\ell$ players, but not by the referee (since they depend on rows $\ell+1$ to $k$ of $M$).

  For each $a \in \sor(t,\ell)$, the players build a new matrix $\widetilde{M_a}$ of size $\ell \times (c_a(\widetilde{g_1}) + \dots + c_a(\widetilde{g_n}))$ where block $\widetilde{B_j}$ from $\widetilde{M}$ is copied $c_a(\widetilde{g_j}) \in [0,2n)$ times. Note that $\widetilde{M_a}$ has at most $2n^2$ blocks, and there are enough players $\ell = 4^{2^t} \log n$ for applying (the boolean input version of) the simultaneous protocol of Theorem \ref{Thm:symSym}. It allows the referee to know the number of blocks of $\widetilde{M_a}$ which are equal ---up to row permutation--- to any $\widetilde{B} \in (\rn^t)^{\ell}$ . This information is sufficient to compute $\sum_{j=1}^n c_a(\widetilde{g_j}) \cdot m_a(\widetilde{B_j})$ since the $m_a$ functions are $k$-symmetric.

  Finally, the referee sums these quantities modulo $p$ over all $a$. It gives her $\sum_{a \in \sor(t,\ell)} \\ \sum_{j=1}^n c_a(\widetilde{g_j}) \cdot m_a(\widetilde{B_j}) \mod p = \sum_{j=1}^n \widetilde{g_j}(\widetilde{B_j}) \mod p$. Since $\sum_{j=1}^n \widetilde{g_j}(\widetilde{B_j}) \leq n$ and $p > n$, it equals $\sum_{j=1}^n \widetilde{g_j}(\widetilde{B_j}) = \sum_{j=1}^n g_j(B_j)$. Knowing this, the referee can compute $f \circ (g_1, \dots, g_n)(M)$ since $f$ is symmetric.

  Regarding the cost of the protocol, we applied $\abs{\sor(t,\ell)} = \binom{\ell + 2^t - 1}{2^t-1}$ times the protocol of Theorem \ref{Thm:symSym}, with $\ell$ players and inputs of size at most $2 n^2$. Thus the total cost is at most $\binom{\ell + 2^t - 1}{2^t-1} \cdot \ell \binom{\ell+2^t-1}{2^t-1} \lceil\log 2n^2\rceil \leq \ell (\ell + 2^t)^{2^{t+1}-2} \log n$. Since $\ell = 4^{2^t} \log n$ and $t = \left\lceil \log d \right\rceil$, this is less than $4^{2^t + (2^t+1)(2^{t+1}-2)} \log^{2^{t+1}}(n) \leq 4^{2^{\lceil \log d \rceil + 2}} \log^{2 \cdot 2^{\lceil \log d \rceil}}(n)$.
\end{proof}

\section{Conclusion and Open Problems}

One of the main open problems in communication complexity remains to find a function which is hard to compute for $k \geq \log n$ players in the simultaneous Number On the Forehead model. We discarded this possibility for the composed functions in $\sym \circ \vect{\sym}_{\Z_d}$ (for constant $d$) by giving the first efficient \textit{deterministic simultaneous} protocol for composed functions of block-width $t > 1$. In the non-simultaneous setting, the best result so far applies to $\sym \circ \vect{\any}_{\Z_d}$ and $d = \bo{\polylog n}$ \cite{CS14}. Extending these protocols to larger $d$, bigger families of composed functions or to the simultaneous setting (for \cite{CS14}) would give a better insight on composed functions. Indeed, it is conjectured that the $\log n$ barrier can be broken by such functions for large $d$, two of the candidates being $\maj \circ \maj_t$ and $\maj \circ \thr^s_t$.

Note that both the Equation Solving and the Polynomial Representation parts of our protocol are bottleneck for handling non-constant $d$ in our result. It could be interesting to restrict to smaller families than symmetric functions (or to choose specific inner or outer functions, such as threshold functions), or to find other relevant equations that could be solved by the referee with fewer information than in our protocol.

Apart from composed functions, there are a few other candidates for breaking the $\log n$ barrier. Some of them are matrix related problems, such as deciding the top-left entry of the multiplication of $k$ matrices in $\F^{n \times n}_2$ (an $\Omega(n/2^k)$ lower bound has been obtained by Raz \cite{Raz00}). More recently, Gowers and Viola \cite{GV15} studied the interleaved group products, where each player receives a tuple $(x_{i,1},\dots,x_{i,n})$ in $G = SL(2,q)$, with the promise that $\prod_{i=1}^n x_{1,i} \cdots x_{k,i} = g \textit{ or } h$. Finding which is the case has cost $\Omega(n \log \abs{G})$ when $k = 2$, and it is conjectured to remain hard for larger $k$.


\section*{Acknowledgements}

This work was initiated during a visit to Carnegie Mellon University. The author is very grateful to Anil Ada, who introduced him to the $\log n$ barrier problem and the $\maj \circ \maj_t$ conjecture for composed functions. He also thanks him for helpful discussions on this subject, as well as the anonymous referees for their valuable comments and suggestions which helped to improve this paper.


\bibliographystyle{alpha}
\bibliography{ComposedFunctions}

\begin{thebibliography}{BCWdW01}

\bibitem[ACFN15]{ACFN15}
A.~Ada, A.~Chattopadhyay, O.~Fawzi, and P.~Nguyen.
\newblock The {NOF} multiparty communication complexity of composed functions.
\newblock {\em Computational Complexity}, 24(3):645--694, 2015.

\bibitem[Amb96]{Amb96}
A.~Ambainis.
\newblock Communication complexity in a 3-computer model.
\newblock {\em Algorithmica}, 16(3):298--301, 1996.

\bibitem[BCWdW01]{BCWW01}
H.~Buhrman, R.~Cleve, J.~Watrous, and R.~de~Wolf.
\newblock Quantum fingerprinting.
\newblock {\em Phys. Rev. Lett.}, 87:167902, 2001.

\bibitem[BGK15]{BGK15}
R.~C. Bottesch, D.~Gavinsky, and H.~Klauck.
\newblock Equality, revisited.
\newblock {\em CoRR}, abs/1511.01211, 2015.

\bibitem[BGKL04]{BGKL04}
L.~Babai, A.~G\'{a}l, P.~G. Kimmel, and S.~V. Lokam.
\newblock Communication complexity of simultaneous messages.
\newblock {\em SIAM J. Comput.}, 33(1):137--166, 2004.

\bibitem[BH12]{BH12}
P.~Beame and T.~Huynh.
\newblock Multiparty communication complexity and threshold circuit size of
  {AC0}.
\newblock {\em SIAM Journal on Computing}, 41(3):484--518, 2012.

\bibitem[BK97]{BK97}
L.~Babai and P.~G. Kimmel.
\newblock Randomized simultaneous messages: solution of a problem of {Yao} in
  communication complexity.
\newblock In {\em Proceedings of Computational Complexity. Twelfth Annual IEEE
  Conference}, pages 239--246, 1997.

\bibitem[BKL95]{BKL95}
L.~Babai, P.~G. Kimmel, and S.~V. Lokam.
\newblock Simultaneous messages vs. communication.
\newblock In {\em 12th Annual Symposium on Theoretical Aspects of Computer
  Science (STACS)}, pages 361--372. Springer, 1995.

\bibitem[BNS92]{BNS92}
L.~Babai, N.~Nisan, and M.~Szegedy.
\newblock Multiparty protocols, pseudorandom generators for logspace, and
  time-space trade-offs.
\newblock {\em J. Comput. Syst. Sci.}, 45(2):204--232, 1992.

\bibitem[BPS07]{BPS07}
P.~Beame, T.~Pitassi, and N.~Segerlind.
\newblock Lower bounds for {Lov\'{a}sz--Schrijver} systems and beyond follow
  from multiparty communication complexity.
\newblock {\em {SIAM} J. Comput.}, 37(3):845--869, 2007.

\bibitem[BPSW05]{BPSW05}
P.~Beame, T.~Pitassi, N.~Segerlind, and A.~Wigderson.
\newblock A direct sum theorem for corruption and the multiparty {NOF}
  communication complexity of set disjointness.
\newblock In {\em Proceedings of the 20th Annual IEEE Conference on
  Computational Complexity}, CCC '05, pages 52--66. IEEE Computer Society,
  2005.

\bibitem[BPSW06]{BPS06}
P.~Beame, T.~Pitassi, N.~Segerlind, and A.~Wigderson.
\newblock A strong direct product theorem for corruption and the multiparty
  communication complexity of disjointness.
\newblock {\em Comput. Complex.}, 15(4):391--432, 2006.

\bibitem[BT94]{BT94}
R.~Beigel and J.~Tarui.
\newblock On {ACC}.
\newblock {\em Computational Complexity}, 4(4):350--366, 1994.

\bibitem[BYJKS02]{BJKS02}
Z.~Bar-Yossef, T.~S. Jayram, R.~Kumar, and D.~Sivakumar.
\newblock Information theory methods in communication complexity.
\newblock In {\em Proceedings 17th IEEE Annual Conference on Computational
  Complexity}, pages 72--81, 2002.

\bibitem[CA08]{CA08}
A.~Chattopadhyay and A.~Ada.
\newblock Multiparty communication complexity of disjointness.
\newblock {\em arXiv preprint arXiv:0801.3624}, 2008.

\bibitem[CFL83]{CFL83}
A.~K. Chandra, M.~L. Furst, and R.~J. Lipton.
\newblock Multi-party protocols.
\newblock In {\em Proceedings of the Fifteenth Annual ACM Symposium on Theory
  of Computing}, STOC '83, pages 94--99. ACM, 1983.

\bibitem[CS14]{CS14}
A.~Chattopadhyay and M.~E. Saks.
\newblock The power of super-logarithmic number of players.
\newblock In {\em Approximation, Randomization, and Combinatorial Optimization.
  Algorithms and Techniques (APPROX/RANDOM)}, 2014.

\bibitem[CT93]{CT93}
F.~R.~K. Chung and P.~Tetali.
\newblock Communication complexity and quasi randomness.
\newblock {\em SIAMJDiscreteMath}, 6(1):110--123, 1993.

\bibitem[GRd08]{GRW08}
D.~Gavinsky, O.~Regev, and R.~{de Wolf}.
\newblock Simultaneous communication protocols with quantum and classical
  messages.
\newblock {\em Chicago Journal of Theoretical Computer Science}, 7, 2008.

\bibitem[Gro94]{Gro94}
V.~Grolmusz.
\newblock The {BNS} lower bound for multi-party protocols is nearly optimal.
\newblock {\em Information and Computation}, 112:51--54, 1994.

\bibitem[GV15]{GV15}
T.~Gowers and E.~Viola.
\newblock The communication complexity of interleaved group products.
\newblock In {\em Proceedings of the Forty-Seventh Annual ACM on Symposium on
  Theory of Computing}, STOC '15, pages 351--360. ACM, 2015.

\bibitem[HG91]{HG91}
J.~H{\aa}stad and M.~Goldmann.
\newblock On the power of small-depth threshold circuits.
\newblock {\em Computational Complexity}, 1(2):113--129, 1991.

\bibitem[LS09]{LS09}
T.~Lee and A.~Shraibman.
\newblock Disjointness is hard in the multiparty number-on-the-forehead model.
\newblock {\em Computational Complexity}, 18(2):309--336, 2009.

\bibitem[NS96]{NS96}
I.~Newman and M.~Szegedy.
\newblock Public vs. private coin flips in one round communication games
  (extended abstract).
\newblock In {\em Proceedings of the Twenty-eighth Annual ACM Symposium on
  Theory of Computing}, STOC '96, pages 561--570. ACM, 1996.

\bibitem[NW93]{Nis93}
N.~Nisan and A.~Wigderson.
\newblock Rounds in communication complexity revisited.
\newblock {\em SIAM J. Comput.}, 22(1):211--219, 1993.

\bibitem[O'D14]{Odo14}
R.~O'Donnell.
\newblock {\em Analysis of Boolean Functions}.
\newblock Cambridge University Press, 2014.

\bibitem[PRS97]{PRS97}
P.~Pudl\'{a}k, V.~R\"{o}dl, and J.~Sgall.
\newblock Boolean circuits, tensor ranks, and communication complexity.
\newblock {\em SIAM J. Comput.}, 26(3):605--633, 1997.

\bibitem[PS17]{PS17}
V.~V. Podolskii and A.~A. Sherstov.
\newblock Inner product and set disjointness: Beyond logarithmically many
  parties.
\newblock {\em CoRR}, abs/1711.10661, 2017.

\bibitem[Raz00]{Raz00}
R.~Raz.
\newblock The {BNS-Chung} criterion for multi-party communication complexity.
\newblock {\em Computational Complexity}, 9:2000, 2000.

\bibitem[RY15]{RY15}
A.~Rao and A.~Yehudayoff.
\newblock Simplified lower bounds on the multiparty communication complexity of
  disjointness.
\newblock In {\em Proceedings of the 30th Conference on Computational
  Complexity}, CCC '15, pages 88--101, Germany, 2015. Schloss
  Dagstuhl--Leibniz-Zentrum fuer Informatik.

\bibitem[She11]{She11}
A.~A. Sherstov.
\newblock The pattern matrix method.
\newblock {\em SIAM Journal on Computing}, 40(6):1969--2000, 2011.

\bibitem[She14]{She14}
A.~A. Sherstov.
\newblock Communication lower bounds using directional derivatives.
\newblock {\em J. ACM}, 61(6):34:1--34:71, 2014.

\bibitem[She16]{She16}
A.~A. Sherstov.
\newblock The multiparty communication complexity of set disjointness.
\newblock {\em SIAM Journal on Computing}, 45(4):1450--1489, 2016.

\bibitem[Tes03]{Tes03}
P.~Tesson.
\newblock {\em Computational Complexity Questions Related to Finite Monoids and
  Semigroups}.
\newblock PhD thesis, Montreal, Canada, 2003.

\bibitem[Wil14]{Wil14}
R.~Williams.
\newblock Nonuniform {ACC} circuit lower bounds.
\newblock {\em J. ACM}, 61(1):2:1--2:32, 2014.

\bibitem[Yao79]{Yao79}
A.~Yao.
\newblock Some complexity questions related to distributive computing.
\newblock In {\em Proceedings of the Eleventh Annual ACM Symposium on Theory of
  Computing}, STOC '79, pages 209--213. ACM, 1979.

\bibitem[Yao90]{Yao90}
A.~Yao.
\newblock On {ACC} and threshold circuits.
\newblock In {\em Proceedings 31st Annual Symposium on Foundations of Computer
  Science}, pages 619--627 vol.2, 1990.

\end{thebibliography}

\appendix


  \section{Lemma for the Equation Solving part}
  \label{App:babaiGen}

In this section, we prove the following lemma:

\begin{lemma}
  \label{Lem:babaiGen}
  Let $n \geq 2$, $d \geq 1$ and $k \geq 4^d \log n$. Let $(b_{e_1,\dots,e_d})_{0 \leq e_1 + \dots + e_d \leq k-1}$ be integers. Consider the following system of equations:
    \begin{equation}
      \label{Eq:babaiGenEq}
      \left\{
        \def\arraystretch{1.5}
        \begin{array}{l l}
          (k-(e_1 + \dots + e_d)) y_{e_1,\dots,e_d} + \sum\limits_{s=1}^d (e_s + 1) y_{e_1,\dots,e_{s-1},e_s+1,e_{s+1},\dots,e_d} = b_{e_1,\dots,e_d} \\
          0 \leq e_1 + \dots + e_d \leq k-1
        \end{array}
      \right.
    \end{equation}

  Assume further that
    \begin{equation}
      \label{Eq:babaiGenCons}
      y_{e_1,\dots,e_d} \geq 0,\ 0 \leq e_1 + \dots + e_d \leq k \quad \textit{and} \quad \sum\limits_{e_1 + \dots + e_d \leq k} y_{e_1,\dots,e_d} \leq n
    \end{equation}

  Then, under constraints (\ref{Eq:babaiGenCons}), the system of equations (\ref{Eq:babaiGenEq}) has at most one integral solution.
\end{lemma}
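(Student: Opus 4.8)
My plan is to reduce the statement to the non-existence of a small nonzero \emph{integral} solution of the homogeneous system, recast that as a statement about one homogeneous polynomial, and run an induction on $d$ whose base case $d=1$ is the result of \cite{BGKL04}.

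\emph{Homogenization.} If $y$ and $y'$ both satisfy (\ref{Eq:babaiGenEq})--(\ref{Eq:babaiGenCons}), set $z=y-y'$. Then $z$ is integral, satisfies the homogeneous system obtained from (\ref{Eq:babaiGenEq}) by setting every $b_{e_1,\dots,e_d}$ to $0$, and since $0\le y_{e_1,\dots,e_d},y'_{e_1,\dots,e_d}\le n$ we get $\sum_{e_1+\dots+e_d\le k}|z_{e_1,\dots,e_d}|\le 2n$. It suffices to prove $z=0$. Encode $z$ by the degree-$k$ homogeneous polynomial $Z(t_0,\dots,t_d)=\sum_{|e|\le k}z_{e}\,t_0^{\,k-|e|}t_1^{e_1}\cdots t_d^{e_d}$ (writing $|e|=e_1+\dots+e_d$). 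Comparing coefficients monomial by monomial shows that the homogeneous system is exactly the condition $(\partial_{t_0}+\partial_{t_1}+\dots+\partial_{t_d})Z=0$, i.e.\ $Z$ lies in the kernel of the derivative in the direction $(1,\dots,1)$. After the change of variables $u_s=t_s-t_0$ ($1\le s\le d$) this means $Z=R(t_1-t_0,\dots,t_d-t_0)$ for a degree-$k$ homogeneous $R$ in $d$ variables with integer coefficients, whose coefficients are precisely the ``top-layer'' values $(z_j)_{|j|=k}$. Expanding gives $z_{e}=(-1)^{k-|e|}\sum_{|j|=k,\,j\ge e}\big(\prod_{s}\binom{j_s}{e_s}\big)z_j$, so $z$ is determined by its top layer, and $\sum_{|e|\le k}|z_{e}|$ equals the $\ell_1$-norm of the coefficient vector of $R(1+x_1,\dots,1+x_d)$.

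\emph{Reduced statement and base case.} Thus it is enough to show: a nonzero degree-$k$ homogeneous $R\in\mathbb Z[v_1,\dots,v_d]$ has $\|R(1+x_1,\dots,1+x_d)\|_1>2n$ whenever $k\ge 4^d\log n$ (here $\|\cdot\|_1$ is the sum of absolute values of the coefficients). For $d=1$ this is immediate: $R=z_k v_1^k$ gives $\|R(1+x_1)\|_1=|z_k|\,2^k\ge 2^k>2n$ for $k\ge 4\log n$, $n\ge2$; equivalently the homogeneous recurrence forces $z_e=(-1)^{k-e}\binom ke z_k$ and the middle term is too large, which is the argument of \cite{BGKL04}. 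For the induction step, assume the claim for $d-1$. For each $i\in\{0,1,\dots,d-1\}$ the polynomial $Z|_{t_d=t_i}$ is still degree-$k$ homogeneous, still lies in the kernel of the sum of the remaining partials, has integer coefficients and $\ell_1$-norm $\le 2n$; since $k\ge 4^d\log n\ge 4^{d-1}\log n$, the induction hypothesis forces $Z|_{t_d=t_i}=0$, and by the symmetry of the homogeneous system in the colours $1,\dots,d$ the same holds for every substitution $t_s=t_{s'}$. In the variables $v_s=t_s-t_0$ this says exactly that $R$ is divisible by each $v_s$ and by each $v_s-v_{s'}$.

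\emph{The main obstacle.} What remains --- and where the bulk of the work lies --- is to deduce $R=0$ from this divisibility together with $\|R(1+x_1,\dots,1+x_d)\|_1\le 2n$, integrality of $R$, and $k\ge 4^d\log n$. Since $\sum_{|j|=k}|z_j|\le 2n$, at most $2n$ top-layer coefficients are nonzero, so in $z_{e}=(-1)^{k-|e|}\sum_{j\ge e}\big(\prod_s\binom{j_s}{e_s}\big)z_j$ the cancellation is limited; the plan is to exhibit an index $e$ (guided by a nonzero $z_{j^\ast}$, and using the forced divisibility to constrain the shape of the support of the top layer) at which the binomial factors $\prod_s\binom{j_s}{e_s}$ make one term dominate so strongly that $|z_{e}|>2n$, contradicting $\sum_{|e|\le k}|z_e|\le 2n$ --- equivalently, one shows that $z$ must be supported on $|e|$ below a small multiple of $4^{d-1}\log n$, after which a downward induction on $|e|$ using the homogeneous equations $(k-|e|)z_e+\sum_s(e_s+1)z_{e+\mathbf 1_s}=0$ kills it. Making this domination/localization robust, in particular against clustered top-layer supports, is the delicate point, and it is precisely what the extra factor $4$ in the hypothesis $k\ge 4^d\log n$ (as opposed to the $4^{d-1}$ used by the induction hypothesis) is there to pay for.
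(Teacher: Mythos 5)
There is a genuine gap. Your homogenization step (passing to $z=y-y'$ with $\sum|z_e|\le 2n$) is exactly the paper's reduction to its key lemma, and your polynomial reformulation is correct and even elegant: the homogeneous system is indeed equivalent to $(\partial_{t_0}+\dots+\partial_{t_d})Z=0$, which forces $Z=R(t_1-t_0,\dots,t_d-t_0)$ with $R$ the top layer, and restricting $t_d=t_i$ does give, via the induction hypothesis, that $R$ is divisible by each $v_s$ and each $v_s-v_{s'}$. But at that point the proof stops: divisibility by these $d(d+1)/2$ linear forms is far too weak on its own (it leaves a cofactor of degree $k-d(d+1)/2$ completely unconstrained), and you give no argument that it, together with integrality and $k\ge 4^d\log n$, forces $\|R(1+x_1,\dots,1+x_d)\|_1>2n$. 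Your last paragraph explicitly concedes that this "domination/localization" step is the delicate point and offers only a plan; moreover the plan as stated ("$z$ must be supported on $|e|$ below a small multiple of $4^{d-1}\log n$") is essentially equivalent to the conclusion $R=0$ itself, so nothing is actually proved there.

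For comparison, the paper closes this gap with a different, quantitative use of the induction hypothesis. It sets $Z_i=\max_{|e|=i}|z_e|$ and $k^+=\min\{i:Z_i\neq 0\}$, and applies the rank-$(d-1)$ statement not to restrictions with parameter $k$ but to the slice $\{z_e:|e|=k^+\}$ viewed through the level-$(k^+-1)$ equations as a rank-$(d-1)$ system with parameter $k^+$; this is why the paper proves the strengthened form "no nonzero solution whenever $k>4^d\log n-d$", so that it can be invoked at the possibly small parameter $k^+$, yielding $k^+\le 4^{d-1}\log n-(d-1)$. (Your reduced statement, with threshold $k\ge 4^{d-1}\log n$, would not support this application.) It then derives the inter-layer growth inequality $(k-i)Z_i\le(i+d)Z_{i+1}$ by the triangle inequality, telescopes it into $Z_i\ge\binom{k+d-1}{i+d-1}\binom{k+d-1}{k^++d-1}^{-1}$, and uses an entropy bound on $\binom{k+d-1}{k^++d-1}$ to get $\sum_i Z_i>2n$, contradicting the norm constraint. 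This chain of estimates is the heart of the induction step, and it is precisely what is missing from your proposal.
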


In fact, we are going to show a stronger result:

\begin{lemma}
  \label{Lem:babaiGenBis}
  Let $n \geq 2$, $d \geq 1$ and $k > 4^d \log n - d$. Let $(b_{e_1,\dots,e_d})_{0 \leq e_1 + \dots + e_d \leq k-1}$ be integers. Consider the following system of equations:
    \begin{equation}
      \label{Eq:babaiGenEqBis}
      \left\{
        \def\arraystretch{1.5}
        \begin{array}{l l}
          (k-(e_1 + \dots + e_d)) z_{e_1,\dots,e_d} + \sum\limits_{s=1}^d (e_s + 1) z_{e_1,\dots,e_{s-1},e_s+1,e_{s+1},\dots,e_d} = 0 \\
          0 \leq e_1 + \dots + e_d \leq k-1
        \end{array}
      \right.
    \end{equation}

  Assume further that
    \begin{equation}
      \label{Eq:babaiGenConsBis}
      \sum\limits_{e_1 + \dots + e_d \leq k} |z_{e_1,\dots,e_d}| \leq 2n
    \end{equation}

  Then, under constraints (\ref{Eq:babaiGenConsBis}), the system of equations (\ref{Eq:babaiGenEqBis}) cannot have a non-zero integral solution.
\end{lemma}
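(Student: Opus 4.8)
I would prove Lemma~\ref{Lem:babaiGenBis} by induction on $d$. For the base case $d=1$ the $e$-th equation of (\ref{Eq:babaiGenEqBis}) reads $(k-e)z_e+(e+1)z_{e+1}=0$ for $0\le e\le k-1$, which is a single linear recursion determining every $z_e$ from $z_0$; one gets $z_e=(-1)^e\binom{k}{e}z_0$, hence $\sum_{e=0}^k|z_e|=2^k|z_0|$. If $z\ne 0$ then $|z_0|\ge 1$, so the total weight is at least $2^k$, and since $k>4\log n-1>1+\log n$ for $n\ge 2$ this exceeds $2n$, contradicting (\ref{Eq:babaiGenConsBis}). (This is the $d=1$ argument of \cite{BGKL04}.)

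For the inductive step the plan is first to put the system into a more workable form. Multiplying the $e$-th equation of (\ref{Eq:babaiGenEqBis}) by $\prod_i x_i^{e_i}$ and summing over all $e$ with $|e|\le k-1$ collapses (after a reindexing of the $z_{e+\mathbf 1_s}$ terms) to the single identity $k\,P+\sum_{i=1}^d(1-x_i)\,\partial_{x_i}P=0$, where $P(x):=\sum_e z_e x^e$; substituting $u_i=1-x_i$ and applying Euler's relation to the homogeneous components shows that $Q(u):=P(1-u_1,\dots,1-u_d)$ is a \emph{homogeneous} polynomial of degree exactly $k$ with integer coefficients, nonzero iff $z\ne 0$, and that $\sum_e|z_e|$ equals the sum of absolute values of the coefficients of $Q(1-x_1,\dots,1-x_d)$. (Equivalently, solving the recursion top-down gives the closed form $z_e=(-1)^{k-|e|}\sum_{|f|=k,\ f\ge e}\big(\prod_s\binom{f_s}{e_s}\big)z_f$.) Now for each $r\ge 0$ define the marginal $w^{(r)}_{e'}:=\sum_{e_d}\binom{e_d}{r}z_{e',\,e_d}$ over the last coordinate; a short computation shows the $\binom{e_d}{r+1}$-terms telescope away, so $w^{(r)}$ is an integral solution of the $(d-1)$-dimensional system with parameter $k-r$ (in the $Q$-picture, $w^{(r)}$ is, up to sign, the $(d-1)$-solution attached to the coefficient of $u_d^{\,r}$ in $Q$, which is homogeneous of degree $k-r$). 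Since $Q\ne 0$, some $w^{(r)}$ is nonzero, and by the symmetry of the system in the $d$ coordinates the same holds for the marginals over any coordinate.

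The induction is then driven as follows. The plain marginal $w^{(0)}$ already satisfies $\sum_{e'}|w^{(0)}_{e'}|\le\sum_e|z_e|\le 2n$ and is a $(d-1)$-solution with the \emph{same} $k$; since $k>4^d\log n-d$ trivially gives $k>4^{d-1}\log n-(d-1)$, the induction hypothesis forces $w^{(0)}=0$, i.e.\ $u_d\mid Q$, and by the coordinate symmetry $u_1\cdots u_d\mid Q$. Iterating this degree-stripping while tracking how the $\ell_1$-weight evolves eventually drives the degree below the $(d-1)$-threshold, at which point a residual nonzero marginal of small weight but degree still exceeding $4^{d-1}\log n-(d-1)$ contradicts the induction hypothesis; the bookkeeping is what turns $4^{d-1}$ into the stated $4^d$. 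The main obstacle is precisely this norm control: the naive bound $\sum_{e'}|w^{(r)}_{e'}|\le\binom{k}{r}\sum_e|z_e|$ is far too lossy once $r$ is large, so one must choose the stripping (and the marginals fed to the induction) so that the $\ell_1$-weight does not blow up — exploiting that the weighted sums in play are telescoping/zero-sum sequences, together with testing $\sum_e|z_e|$ against bounded linear functionals (for instance $\sum_e|z_e|\ge 2^k\,|Q(\mathbf e_S)|$ for every $S\subseteq[d]$, obtained by evaluating $Q(1-x)$ at $\pm1$-points, which must be supplemented by the recursion in the degenerate case where all such evaluations vanish). Making this quantitative enough to land on the clean constant $4^d$ is the crux of the proof.
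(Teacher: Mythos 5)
Your base case and your reformulation are both correct: the system (\ref{Eq:babaiGenEqBis}) is indeed equivalent to $kP+\sum_i(1-x_i)\partial_{x_i}P=0$ for $P(x)=\sum_e z_ex^e$, hence (via $u=1-x$ and Euler's relation) to $P(x)=Q(1-x)$ with $Q$ homogeneous of degree $k$, and your marginals $w^{(r)}$ do solve the $(d-1)$-dimensional system with parameter $k-r$. But the inductive step is not closed, and you say so yourself: the only marginal whose $\ell_1$-norm you can control is $w^{(0)}$ (giving $u_1\cdots u_d\mid Q$), the $\pm1$-evaluation bound $\sum_e|z_e|\geq 2^k|Q(\mathbf{1}_S)|$ dies exactly in the degenerate case where $Q$ vanishes on all $0/1$-points (which happens already for $d=2$, e.g.\ $Q=u_1u_2(u_1-u_2)$), and the "iterated degree-stripping with norm bookkeeping" that is supposed to produce the contradiction is never specified — indeed the sketch is internally inconsistent (a marginal of degree simultaneously below and above the $(d-1)$-threshold), and the naive bound $\binom{k}{r}\cdot 2n$ on $\|w^{(r)}\|_1$ makes the induction hypothesis inapplicable to every marginal you would need. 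What is missing is precisely the mechanism that forces $\sum_e|z_e|>2n$; labelling it "the crux of the proof" does not supply it.

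For comparison, the paper avoids weighted marginals entirely. It looks at the lowest nonzero level $k^+=\min\{i:\max_{|e|=i}|z_e|\neq 0\}$ and observes that the equations at level $k^+-1$ involve only entries on the slice $|e|=k^+$; reading that slice as $z'_{e_1,\dots,e_{d-1}}=z_{e_1,\dots,e_{d-1},k^+-(e_1+\dots+e_{d-1})}$ gives a nonzero solution of the $(d-1)$-dimensional system with parameter $k^+$ whose $\ell_1$-norm is trivially at most $2n$ (it is a subset of the original entries, with no binomial weights), so the induction hypothesis forces $k^+\leq 4^{d-1}\log n-(d-1)$ — this is also why the lemma is stated with the threshold $k>4^d\log n-d$ rather than $k\geq 4^d\log n$. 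It then proves the growth inequality $(k-i)Z_i\leq(i+d)Z_{i+1}$ for the level maxima $Z_i$, which yields $Z_i\geq\binom{k+d-1}{i+d-1}\binom{k+d-1}{k^++d-1}^{-1}$ and, by an entropy estimate, $\sum_iZ_i>2n$, the desired contradiction. Your proposal has no analogue of either the slice argument or this counting step, so as it stands it does not prove Lemma~\ref{Lem:babaiGenBis}. If you want to salvage your polynomial picture, you would need to extract from it a quantitative lower bound on the coefficient $\ell_1$-norm of a nonzero homogeneous $Q$ of degree $k$ after the substitution $u=1-x$; the paper's steps (a)--(c) are in effect exactly such a bound, and you could try to transcribe them into the $Q$-language, but they must be proved, not presupposed.
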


\begin{proof}[Proof that Lemma \ref{Lem:babaiGenBis} implies Lemma \ref{Lem:babaiGen}]
  Assume by contradiction that Equations (\ref{Eq:babaiGenEq}) under Constraints (\ref{Eq:babaiGenCons}) have two different integral solutions $y = (y_{e_1,\dots,e_d})_{0 \leq e_1 + \dots + e_d \leq k}$ and $y' = (y'_{e_1,\dots,e_d})_{0 \leq e_1 + \dots + e_d \leq k}$ for $k \geq 4^d \log n$. Define $z_{e_1,\dots,e_d} = y_{e_1,\dots,e_d} - y'_{e_1,\dots,e_d}$. It is easy to see that it must verify (\ref{Eq:babaiGenEqBis}), and since $y \neq y'$ there is at least one $z_{e_1,\dots,e_d} \neq 0$. Finally, since $z_{e_1,\dots,e_d} = |y_{e_1,\dots,e_d} - y'_{e_1,\dots,e_d}| \leq y_{e_1,\dots,e_d} + y'_{e_1,\dots,e_d}$, we have
    \[\sum\limits_{e_1 + \dots + e_d \leq k} |z_{e_1,\dots,e_d}| \leq \sum\limits_{e_1 + \dots + e_d \leq k} (y_{e_1,\dots,e_d} + y'_{e_1,\dots,e_d}) \leq 2n\]
\end{proof}

\begin{proof}[Proof of Lemma \ref{Lem:babaiGenBis}]
  We prove the result by induction on $d$. The base case has already been established in \cite{BGKL04}, we recall it for completeness.

  \textbf{Base case (d = 1).}
  We denote $(z_i)_{0 \leq i \leq k}$ the variables. Equations (\ref{Eq:babaiGenEqBis}) under Constraints (\ref{Eq:babaiGenConsBis}) become
    \[
    \left\{
      \def\arraystretch{1.5}
      \begin{array}{l l}
        (k-i)z_i + (i+1)z_{i+1} = 0,\ i = 0,1,\dots,k-1 \\
        \sum_{i=0}^k |z_i| \leq 2n
      \end{array}
    \right.
  \]
  Thus, $z_1 = -k z_0 = -\binom{k}{1} z_0$, $z_2 = -\frac{k-1}{2} z_1 = \binom{k}{2} z_0$, and more generally $z_i = (-1)^i \binom{k}{i} z_0$. Consequently, if $(z_i)_{0 \leq i \leq k}$ is a nonzero integral solution, then $z_0 \neq 0$ and $\abs{z_i} = \binom{k}{i} \abs{z_0} \geq \binom{k}{i}$ for all $i$. We obtain a contradiction: $2n \geq \sum_{i=0}^k \abs{z_i} \geq \sum_{i=0}^k \binom{k}{i} = 2^k > 2^{4 \log n - 1} > 2n$. Thus, Lemma \ref{Lem:babaiGenBis} is true for $d = 1$.

  \textbf{Induction step.}
  Assuming that Lemma \ref{Lem:babaiGenBis} is true for $d-1$, we prove that it is also the case for $d \geq 2$. Suppose by contradiction that Equations (\ref{Eq:babaiGenEqBis}) under Constraints (\ref{Eq:babaiGenConsBis}) have a non-zero integral solution $z = (z_{e_1,\dots,e_d})_{0 \leq e_1 + \dots + e_d \leq k}$ for $k > 4^d \log n - d$. As in the proof of the base case, we want to show $\sum_{e_1 + \dots + e_d \leq k} |z_{e_1,\dots,e_d}| > 2n$, which would be a contradiction.

  To this end, we are going to focus for each $0 \leq i \leq k$ on the largest element of $\{|z_{e_1,\dots,e_d}| : e_1 + \dots + e_d = i\}$. We define
    \[\qquad \qquad Z_i = \max\limits_{e_1 + \dots + e_d = i} |z_{e_1,\dots,e_d}| \quad \text{ and } \quad \kp = \min \{i : Z_{i} \neq 0\}\]

  Since $z$ is a nonzero solution, $\kp$ is well defined. We conduct the proof as follows:
    \begin{enumerate}
      \item[(a)] Using the induction hypothesis, we show that the first nonzero $Z_i$ must occur for $i = \kp \leq 4^{d-1} \log n - (d-1)$.
      \item[(b)] The sequence $(Z_i)_i$ verifies $\frac{k-i}{i+d} Z_i \leq Z_{i+1}$.
      \item[(c)] Using the two previous results, we prove $\sum_{i=0}^k Z_i > 2n$.
    \end{enumerate}
  The contradiction comes then from $\sum_{i=0}^k Z_i \leq \sum_{e_1 + \dots + e_d \leq k} |z_{e_1,\dots,e_d}| \leq 2n$


  \textit{Proof of (a).} Assume $\kp > 0$ (otherwise the result is trivial). According to Equations (\ref{Eq:babaiGenEqBis}), and knowing that $z_{e_1,\dots,e_d} = 0$ whenever $e_1 + \dots + e_d < \kp$, we have
    \[\sum_{s=1}^d (e_s + 1) z_{e_1,\dots,e_{s-1},e_s+1,e_{s+1},\dots,e_d} = 0\]
  for all $e_1 + \dots + e_d = \kp - 1$. If we set apart the last term $z_{e_1,\dots,e_{d-1},e_d+1}$, we obtain
    \[(\kp - (e_1+\dots+e_{d-1}))z_{e_1,\dots,e_{d-1},e_d+1} + \sum\limits_{s=1}^{d-1} (e_s + 1) z_{e_1,\dots,e_{s-1},e_s+1,e_{s+1},\dots,e_d} = 0 \]

  Let $z'_{e_1,\dots,e_{d-1}} = z_{e_1,\dots,e_{d-1},\kp-(e_1 + \dots + e_{d-1})}$ for all $0 \leq e_1 + \dots + e_{d-1} \leq \kp$. We can change the variables in the previous equation as follows
    \[
      \left\{
        \def\arraystretch{1.5}
        \begin{array}{l l}
          (\kp-(e_1 + \dots + e_{d-1})) z'_{e_1,\dots,e_{d-1}} + \sum\limits_{s=1}^{d-1} (e_s + 1) z'_{e_1,\dots,e_{s-1},e_s+1,e_{s+1},\dots,e_{d-1}} = 0 \\
          0 \leq e_1 + \dots + e_{d-1} \leq \kp - 1
        \end{array}
      \right.
    \]
  This is equivalent to Equations (\ref{Eq:babaiGenEqBis}) at rank $d-1$. Moreover, $\sum\limits_{e_1 + \dots + e_{d-1} \leq \kp} |z'_{e_1,\dots,e_{d-1}}| \leq 2n$, and there exists $e_1 + \dots + e_d = \kp$ such that $z_{e_1,\dots,e_d} \neq 0$ (by definition of $\kp$), i.e. $z'_{e_1,\dots,e_{d-1}} \neq 0$. Consequently, it corresponds to a nonzero integral solution to Equations (\ref{Eq:babaiGenEqBis}) under Constraints (\ref{Eq:babaiGenConsBis}) at rank $d-1$ with parameter $\kp$. According to our induction hypothesis it implies $\kp \leq 4^{d-1} \log n - (d-1) $.


  \textit{Proof of (b).} Setting apart $z_{e_1,\dots,e_d}$ in Equations (\ref{Eq:babaiGenEqBis}), and using the triangle inequality, we obtain
    \[(k-(e_1 + \dots + e_d)) |z_{e_1,\dots,e_d}| \leq \sum\limits_{s=1}^d (e_s + 1) |z_{e_1,\dots,e_{s-1},e_s+1,e_{s+1},\dots,e_d}|\]
  for all $e_1 + \dots + e_d \leq k$. In particular, if we choose $e_1 + \dots + e_d$ such that $Z_{e_1+\dots+e_d} = |z_{e_1,\dots,e_d}|$ then
    \[
      \begin{split}
        (k-(e_1 + \dots + e_d)) Z_{e_1 + \dots + e_d} & \leq \sum_{s=1}^d (e_s + 1) |z_{e_1,\dots,e_{s-1},e_s+1,e_{s+1},\dots,e_d}| \\
                                                      & \leq \sum_{s=1}^d (e_s + 1) Z_{e_1 + \dots + e_d + 1} \\
                                                      & \leq (e_1 + \dots + e_d + d) Z_{e_1 + \dots + e_d + 1}
      \end{split}
    \]
  Thus $(k-i) Z_i \leq (i+d) Z_{i+1}$, where $i = e_1 + \dots + e_d$.


  \textit{Proof of (c).} Using (b), first note for $i > \kp$ that
    \[
      \begin{split}
        Z_i & \geq \frac{k - (i-1)}{(i-1)+d} \cdot \frac{k - (i-2)}{(i-2)+d} \cdots \frac{k - \kp}{\kp+d} \cdot Z_{\kp} \\
            & = \frac{(k - \kp)!}{(k-i)!} \cdot \frac{(\kp+d-1)!}{(i+d-1)!} \cdot Z_{\kp} \\
            & = \frac{(k+d-1)!}{(k-i)!(i+d-1)!} \cdot \frac{(k - \kp)!(\kp+d-1)!}{(k+d-1)!} \cdot Z_{\kp} \\
            & = \binom{k+d-1}{i+d-1} \binom{k+d-1}{\kp+d-1}^{-1} \cdot Z_{\kp} \\
            & \geq \binom{k+d-1}{i+d-1} \binom{k+d-1}{\kp+d-1}^{-1} \qquad \text{since $Z_{\kp} \geq 1$} \\
      \end{split}
    \]

  According to (a) and our initial hypothesis on $k$, we have $\kp+d-1 \leq 4^{d-1} \log n \leq (k+d-1)/4$. Thus $\sum_{i=\kp}^k \binom{k+d-1}{i+d-1} \geq \frac{1}{2} \cdot \sum_{i=0}^{k+d-1} \binom{k+d-1}{i} = 2^{k+d-2}$ and $\binom{k+d-1}{\kp+d-1}^{-1} \geq 2^{-(k+d-1)H(1/4)}$ (using the well-known bound $\binom{m}{\alpha m} \leq 2^{mH(\alpha)}$ where $H(\alpha) = - \log(\alpha^{\alpha} (1-\alpha)^{1-\alpha})$). Consequently, since $d \geq 2$ and $n \geq 2$, we obtain $\sum_{i=\kp}^k Z_i \geq 2^{(1-H(1/4))(k+d-1)-1} \geq 2^{(1-H(1/4)) 4^d \log n -1} > 2n$.
\end{proof}



\end{document}